\newtheorem{theorem}{Theorem}[section]
\newtheorem{lemma}[theorem]{Lemma}
\newtheorem{corollary}[theorem]{Corollary}
\newtheorem{proposition}[theorem]{Proposition}
\newtheorem{definition}[theorem]{Definition}
\newenvironment{proof}{\noindent {\bf Proof:}}{$\hfill\square$}
\newenvironment{pf}[1]{\noindent{\bf Proof of #1:}}{$\hfill\square$}
\newcommand{\ceil}[1]{\left\lceil#1\right\rceil}
\newcommand\card[1]{\left| #1 \right|}
\newcommand\p{{\textsf{P}}}
\newcommand\np{{\textsf{NP}}}
\newcommand\dtime{{\textsf{DTIME}}}
\newcommand\set[1]{\left\{#1\right\}}
\newcommand{\pd}[1]{{\mathcal{P}(#1)}}
\newcommand{\npd}[2]{{\mathcal{P}^{#1}({#2})}}
\newcommand{\Optl}[2]{{\texttt{Opt}_{#1}(#2)}}
\newcommand{\T}{{\mathbf{T}}}
\newcommand{\Y}[3]{Y_{#1 \rightarrow #2}^{#3}}
\newcommand{\uh}[1]{\|#1\|}
\begin{document}
\title{Domination in graphs with bounded propagation: algorithms, formulations and hardness results}
\author{ Ashkan Aazami\\
\texttt{aaazami@uwaterloo.ca}\\
Department of Combinatorics and Optimization \\
University of Waterloo, Waterloo, ON   N2L 3G1, Canada \\
}
\date{\today}
\maketitle
\begin{abstract}
We introduce a hierarchy of problems between the \textsc{Dominating
Set} problem and the \textsc{Power Dominating Set} (PDS) problem
called the $\ell$-round power dominating set ($\ell$-round PDS, for
short) problem. For $\ell=1$, this is the \textsc{Dominating Set}
problem, and for $\ell\geq n-1$, this is the PDS problem; here $n$
denotes the number of nodes in the input graph. In PDS the goal is
to find a minimum size set of nodes $S$ that power dominates all the
nodes, where a node $v$ is power dominated if (1) $v$ is in $S$ or
it has a neighbor in $S$, or (2) $v$ has a neighbor $u$ such that
$u$ and all of its neighbors except $v$ are power dominated. Note
that rule (1) is the same as for the \textsc{Dominating Set}
problem, and that rule (2) is a type of propagation rule that
applies iteratively. The $\ell$-round PDS problem has the same set
of rules as PDS, except we apply rule (2) in ``parallel'' in at most
$\ell-1$ rounds. We prove that $\ell$-round PDS cannot be
approximated better than $2^{\log^{1-\epsilon}{n}}$ even for
$\ell=4$ in general graphs. We provide a dynamic
programming algorithm to solve $\ell$-round PDS optimally in
polynomial time on graphs of bounded tree-width. We present a PTAS
(polynomial time approximation scheme) for $\ell$-round PDS on
planar graphs for $\ell=O(\tfrac{\log{n}}{\log{\log{n}}})$. Finally,
we give integer programming formulations for $\ell$-round PDS.
\end{abstract}

\section{Introduction \label{Sec:Intro}}
The \textsc{Power Dominating Set} problem (PDS, for short) is a
covering problem in which the goal is to ``power dominate'' (cover)
all the nodes of a given undirected graph $G$ by picking as few
nodes as possible. There are two rules for power dominating the
nodes; the first one has a ``local'' effect but the second one
allows a type of indirect propagation.
More precisely, given a set of nodes $S$, the set of nodes that are
{\em power dominated} by $S$, denoted $\pd{S}$, is obtained as
follows
\begin{itemize}
\item[(R1)]
if node $v$ is in $S$, then $v$ and all of its neighbors are in
$\pd{S}$;
\item[(R2)] (propagation)
if node $v$ is in $\pd{S}$, one of its neighbors $w$ is not in
$\pd{S}$, and all other neighbors of $v$ are in $\pd{S}$, then $w$
is inserted into $\pd{S}$.
\end{itemize}
The PDS problem is to find a node-set $S$ of minimum size that power
dominates all the nodes (i.e., find $S\subseteq V$ with $\card{S}$
minimum such that $\pd{S}=V$). For example, consider the planar
graph $G=(V,E)$ in Figure \ref{Fig:Example}. The graph is obtained
by taking the union of $m$ paths on $k+1$ nodes that all meet at a
common node $v$; note that $\card{V}=k\cdot m+1$. It is easy to
check that the size of a minimum PDS in $G$ is $1$; by taking
$S=\set{v}$ we get $\pd{S}=V$. In more detail, by applying (R1) we
power dominate $v$ and all of its neighbors in the set $X_1$; after
that, each node in the set $X_1$ has exactly one neighbor that is
not power dominated yet, namely, its neighbor in $X_2$; thus, we can
sequentially apply propagation rule (R2) to each node in $X_1$ (in
any order) to power dominate all the nodes in $X_2$; continuing in
this way, all the nodes will be power dominated eventually.

The PDS problem arose in the context of monitoring electric power
networks. A power network contains a set of nodes and a set of edges
connecting the nodes. A power network also contains a set of
\emph{generators}, which supply power, and a set of \emph{loads},
where the power is directed to. In order to monitor a power network
we need to measure all the state variables of the network by placing
measurement devices. A Phasor Measurement Unit (PMU) is a
measurement device placed on a node that has the ability to measure
the voltage of the node and the current phase of the edges connected
to the node; PMUs are expensive devices. The goal is to install the
minimum number of PMUs such that the whole system can be monitored.
These units have the capability of monitoring remote elements via
propagation (as in Rule~2); see Brueni \cite{Bib:BR93}, Baldwin et
al.\ \cite{Bib:BAMBA93}, and Mili et al.\ \cite{Bib:MIBP91}. Most
measurement systems require one measurement device per node, but
this does not apply to PMUs; hence, PMUs give a huge advantage. To
see this in more detail consider a power network $G=(V,E)$, and
assume that the resistances of the edges in the power network are
known, and the goal is to measure the voltages of all nodes. For
simplicity, assume that there are no generators and loads. By
placing a PMU at node $v$ we can measure the voltage of $v$ and the
electrical current on each edge incident to $v$. Next, by using
Ohm's law we can compute the voltage of any node in the neighborhood
of $v$ (Rule~1). Now assume that the voltage on $v$ and all of its
neighbors except $w$ is known. By applying ohm's law we can compute
the current on the edges incident to $v$ except $\set{v,w}$. Next by
using Kirchhoff's law we compute the current on the edge
$\set{v,w}$. Finally, applying Ohm's law on the edge $\set{v,w}$
gives us the voltage of $w$ (Rule~2).

PMUs are used to monitor large system stability and to give warnings
of system-wide failures. PMUs have become increasingly popular for
monitoring power networks, and have been installed by several
electricity companies since 1988 \cite{Bib:MA02,Bib:BH99}. For
example, the USA Western States Coordinating Council (WSCC) had
installed around 34 PMUs by 1999 \cite{Bib:BH99}. By now, several
hundred PMUs have been installed world wide \cite{Bib:Phadke}. Some
researchers in electrical engineering regard PMUs as the most
important device for the future of power systems \cite{Bib:NU01}.

The PDS problem is \np-hard even when the input graph is bipartite
\cite{Bib:HAHHH02}. For further references for the PDS problem
please see
\cite{Bib:HAHHH02,Bib:KNMRR04,Bib:GUNR05,Bib:LILE05,Bib:BRHE05,Bib:AS06,Bib:DOHE06,Bib:ZHKA06}.
\begin{figure}[!h]
\begin{center}
    \input{lPDS.pstex_t}
\caption{} \label{Fig:Example}
\end{center}
\end{figure}
PDS is a generalization of the \textsc{Dominating Set} problem. A
\textsc{Dominating Set} of a graph $G=(V,E)$ is a set of nodes $S$
such that every node in the graph is either in $S$ or has a neighbor
in $S$. The problem of finding a \textsc{Dominating Set} of minimum
size in a given graph $G$ has been studied extensively in the past
20 years, see the books by Haynes et al.\
\cite{Bib:HAHESL98a,Bib:HAHESL98b}. The \textsc{Dominating Set}
problem is a well-known $\np$-hard problem \cite{Bib:GAJO79}. A
simple greedy algorithm achieves a logarithmic approximation
guarantee, \cite{Bib:JO74}, and modulo the $\p\not=\np$ conjecture,
no polynomial time  algorithm gives a better approximation
guarantee, \cite{Bib:LUYA94,Bib:FE98,Bib:RASA97}.

In this paper we introduce a hierarchy of problems between
\textsc{Dominating Set} and \textsc{PDS}, by adding a parameter
$\ell$ to PDS which restricts the number of ``parallel'' rounds of
propagation that can be applied. The rules are the same as PDS,
except we try to apply the propagation rule in parallel as much as
possible. In the first round we apply the rule (R1) to all the nodes
in $S$, and for the rest of the rounds we only consider ``parallel''
application of the propagation rule (R2). In every ``parallel''
round we power dominate all the new nodes that can be power
dominated by applying the propagation rule to all of the nodes that
are power dominated in the previous ``parallel'' rounds. Given a
parameter $\ell$, the $\ell$-round PDS  problem is the problem in
which we want to power dominate all of the nodes in at most $\ell$
parallel rounds. Clearly, the $\ell$-round PDS problem for $\ell=1$
is exactly the \textsc{Dominating Set} problem, and for a graph $G$
with $n$ nodes the $\ell$-round PDS problem for $\ell\geq n-1$ is
exactly the PDS problem. The notion of parallel propagation comes
from the fact that changes in the electrical network propagate in
parallel and not sequentially. A feasible solution for the PDS
problem provides a plan for installing monitoring devices to monitor
the whole power network, but it does not provide any guarantees on
the time-lag between a fault in the network and its detection.
Deducing information through a parallel round of propagation takes
one unit of time and in some applications we want to detect a
failure in the network after at most $\ell$ units of time. The
addition of the
 parameter $\ell$ achieves this time constraint.

The practical motivation for the $\ell$-round PDS problem has been
explained above. In addition, there are some theoretical
motivations. Although the PDS problem has been studied since 1993,
there are very few algorithmic results (including approximation
algorithms). The $\ell$-round PDS problem serves as a unified model
for studying the PDS problem and the \textsc{Dominating Set}
problem. The introduction of the parameter $\ell$ allows us to
examine the complexity of the problem in terms of $\ell$: how does
the threshold for the hardness of approximation vary in terms of
$\ell$ ? The hardness threshold is logarithmic for $\ell=1$ (the
\textsc{Dominating Set} problem) and it is
$\Omega(2^{\log^{1-\epsilon}{n}})$ for $\ell=n-1$ (the PDS problem).
Is the latter hardness threshold valid for constant $\ell$?
Moreover, for planar graphs there is extensive recent literature on
PTASs (polynomial time approximation schemes) for the
\textsc{Dominating Set} problem and its variants, but these results
do not apply to the PDS problem. A major open question in the area
is whether there exists a PTAS for the PDS problem on planar graphs.
One avenue that may lead to advances on this question is to design a
PTAS for the $\ell$-round PDS problem on planar graphs, for small
values of $\ell$. Integer programming formulations for the PDS
problem have been studied, and we give a new formulation in the last
part of this paper. Our formulation is based on the notion of
parallel propagation in $\ell$ rounds. We first give a formulation
for the $\ell$-round PDS problem, and then modify it to get another
formulation for the PDS problem.

\subsection{Our main results}
We initiate the study of a natural extension of the PDS problem and
prove the following main results.
\begin{itemize}
\item For general graphs, we present a reduction from the \textsc{MinRep} problem to the
$\ell$-round PDS problem which shows that $\ell$-round PDS for
$\ell\geq4$ cannot be approximated better than
$2^{\log^{1-\epsilon}{n}}$, unless
$\np\subseteq\dtime(n^{polylog(n)})$. We use a reduction similar to
one that has been used to prove the same hardness result for the PDS
and the directed PDS problems in a paper jointly authored with M.
Stilp \cite{Bib:AS06}.
\item We provide a dynamic programming algorithm to solve
the $\ell$-round PDS problem optimally in polynomial time on graphs
of bounded tree-width. This dynamic programming algorithm is based
on our reformulation for the $\ell$-round PDS problem. This
reformulation is an extension of the one introduced by Guo et al.\
\cite{Bib:GUNR05} for PDS. Guo et al.~\cite{Bib:GUNR05} gave a new
formulation for PDS in terms of ``valid orientation'' of the edges;
they use this formulation to design a dynamic programming algorithm
to solve the PDS problem optimally in linear time on graphs of
bounded tree-width.
\item We focus on planar graphs, and give a PTAS for $\ell$-round PDS
for $\ell=O(\frac{\log{n}}{\log{\log{n}}})$. Baker's PTAS
\cite{Bib:BA94} for the \textsc{Dominating Set} problem on planar
graphs is a special case of our result with $\ell=1$, and no similar
result of this type was previously known for $\ell>1$. We also show
that the $\ell$-round PDS problem in planar graphs is $\np$-hard for
all $\ell\geq 1$. Note that our PTAS does not apply to PDS in
general, because the running time is super-polynomial for
$\ell=\omega(\frac{\log{n}}{\log{\log{n}}})$.
\item Finally we study integer programming formulations for $\ell$-round
PDS.
\end{itemize}

Here is a brief discussion on the relation between this paper and
some previous joint work with M. Stilp on the PDS problem
\cite{Bib:AS06}. Although the results in \cite{Bib:AS06} do not
imply any of the results in this paper, there are two topics that
uses similar methods: (1) The hardness result for $\ell$-round PDS
(Theorem \ref{Thm:HardlDPDS}) extends the construction used to prove
the hardness of directed PDS in \cite{Bib:AS06}, (2) The dynamic
programming algorithm for $\ell$-round PDS (Section
\ref{Sec:Dynprog}) and the algorithm for directed PDS in
\cite{Bib:AS06} are both based on reformulation of the problems that
extend the methods of Guo et al.~\cite{Bib:GUNR05}.

\section{Preliminaries \label{Sec:Prelim}}
Most of the graphs that we consider here are undirected. Given a
graph $G=(V,E)$, we denote the number of nodes in the graph by $n$.
Sometimes the graphs that we consider have some directed edges in
addition to undirected edges. Given such a graph, we denote by
$d^-(v)$ and $d^+(v)$ the number of directed edges with $v$ as the
head and the tail respectively. Let $\widehat{G}$ denote the
underlying undirected graph obtained from $G$ by ignoring the
direction of the directed edges. Then the \emph{closed neighborhood}
of a node $u$ in $G$ is defined by $N[u]=\set{v: \set{u,v}\in
E(\widehat{G})}\cup \set{u}$. Now we define the ``parallel''
propagation rule formally.
\begin{definition}
Given a graph $G=(V,E)$ and a subset of nodes $S\subseteq V$, the
set of nodes that can be power dominated by applying at most $k$
rounds of parallel propagation, denoted by $\npd{k}{S}$, is defined
recursively as follows:
$$\npd{k}{S}=\left\{
\begin{array}{lr}
\bigcup_{v\in S}N[v]& k=1\\
\npd{k -1}{S}\bigcup\set{v:(u,v)\in E, N[u]\setminus\set{v}\subseteq \npd{k -1}{S}} & k\geq 2\\
\end{array}\right.$$
\end{definition}
We can now define the $\ell$-round PDS problem formally.
\begin{definition} \emph{($\ell$-round PDS)} \label{Def:lPDS}
Given a parameter $\ell$, the \emph{$\ell$-round PDS} problem is the
problem in which we are given a graph $G=(V,E)$ and the goal is to
find a minimum size subset of nodes $S\subseteq V$, such that
$\npd{\ell}{S}=V$.
\end{definition}
Given a graph $G$ and a parameter $\ell$, we denote by
$\Optl{\ell}{G}$ the size of the optimal solution for the
$\ell$-round PDS problem. It is easy to see that the size of an
optimal solution for $\ell$-round PDS does not increase by
increasing the value of the parameter $\ell$; to see this, consider
an optimal solution $S^*$ for $\ell$-round PDS and note that it is
also a feasible solution for $(\ell+1)$-round PDS. This proves the
following property of the $\ell$-round PDS problem.
\begin{proposition}\label{Pro:Monoton}
Let $G=(V,E)$ be a graph with $n$ nodes, then for any parameter
$1\leq \ell\leq\ell'\leq n$ we have $\Optl{\ell'}{G}\leq
\Optl{\ell}{G}$.
\end{proposition}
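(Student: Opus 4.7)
The plan is to formalize the author's one-line hint by first establishing monotonicity of the family $\npd{k}{S}$ in the parameter $k$, and then transferring this monotonicity to the optimum values.

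First I would prove the auxiliary claim: for every $S \subseteq V$ and every integers $1 \leq k \leq k'$, we have $\npd{k}{S} \subseteq \npd{k'}{S}$. This follows by a trivial induction on $k'-k$. The base case $k' = k$ is immediate. For the inductive step, the recursive definition of $\npd{k'}{S}$ gives $\npd{k'}{S} = \npd{k'-1}{S} \cup \{v : (u,v) \in E, N[u]\setminus\{v\} \subseteq \npd{k'-1}{S}\}$, which in particular yields $\npd{k'-1}{S} \subseteq \npd{k'}{S}$; combining with the inductive hypothesis $\npd{k}{S} \subseteq \npd{k'-1}{S}$ finishes the step.

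Second, I would apply the claim to an optimal solution. Let $S^*$ be an optimal solution for $\ell$-round PDS on $G$, so $\card{S^*} = \Optl{\ell}{G}$ and $\npd{\ell}{S^*} = V$. Since $\ell \leq \ell'$, the auxiliary claim gives $V = \npd{\ell}{S^*} \subseteq \npd{\ell'}{S^*} \subseteq V$, hence $\npd{\ell'}{S^*} = V$. Therefore $S^*$ is feasible for $\ell'$-round PDS, and $\Optl{\ell'}{G} \leq \card{S^*} = \Optl{\ell}{G}$, which is the desired inequality.

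There is no real obstacle here: the whole statement reduces to verifying that the recursive definition only adds nodes, never removes them, which is visible directly from the formula defining $\npd{k}{S}$. The only thing to keep straight is that the base case of the recursion is $k=1$ rather than $k=0$, so the induction on $k'$ should start from $k' = k \geq 1$ rather than from $k' = 0$; this is a cosmetic issue and does not affect the argument.
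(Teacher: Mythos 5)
Your proof is correct and takes essentially the same approach as the paper, which simply observes that an optimal solution for $\ell$-round PDS is also feasible for larger $\ell$ because the recursively defined sets $\npd{k}{S}$ only grow with $k$. You have merely spelled out the monotonicity induction that the paper leaves implicit.
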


The $\ell$-round PDS problem is very different from the
\textsc{Dominating Set} problem, even for $\ell=O(1)$. The following
result shows that the size of an optimal solution for the
\textsc{Dominating Set} problem can be much bigger than the size of
an optimal solution for the $2$-round PDS problem.
\begin{proposition}
Let $\ell\geq 1$ be a given parameter, the ratio between the size of
the optimal solution for $\ell$-round PDS and $(\ell+1)$-round PDS
can be $\Theta({\frac{n}{\ell}})$ even on planar graphs.
\end{proposition}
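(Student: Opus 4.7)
The plan is to use the spider graph from Figure~\ref{Fig:Example} with parameter $k=\ell+1$: a central node $v$ together with $m$ internally disjoint paths $P_1,\ldots,P_m$ of length $\ell+1$, all sharing the endpoint $v$. This graph is planar and has $n=m(\ell+1)+1$ nodes. I write the $j$th leg as $v,x_1^{(j)},x_2^{(j)},\ldots,x_{\ell+1}^{(j)}$ and set $X_t=\{x_t^{(j)}:1\leq j\leq m\}$.

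Both upper bounds are immediate. The calculation sketched just below Figure~\ref{Fig:Example} shows that $\{v\}$ power-dominates the whole graph in $\ell+1$ parallel rounds, so $\Optl{\ell+1}{G}\leq 1$. For $\ell$ rounds, take $S=X_1$: rule~(R1) covers $\{v\}\cup X_1\cup X_2$ in round~$1$, and rule~(R2) then advances the covered frontier by one layer per subsequent round, so $\npd{\ell}{X_1}=V$ and $\Optl{\ell}{G}\leq m$.

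The heart of the proof is the matching lower bound $\Optl{\ell}{G}\geq m$. I will establish the key claim: if $S\cap\{x_1^{(j)},\ldots,x_{\ell+1}^{(j)}\}=\emptyset$, then $x_{\ell+1}^{(j)}\notin\npd{\ell}{S}$. The idea is that $P_j$ meets the rest of $G$ only at $v$, so every node of $P_j$ that is ever power dominated must be reached by a wave travelling outward from $v$ along $P_j$. A ``first-covered-node'' argument rules out R1 on any node of $P_j$ except possibly at $x_1^{(j)}$ via $v$, and it rules out R2 steps in the inward direction (from $x_{t+1}^{(j)}$ to $x_t^{(j)}$), because such a step would require $x_{t+2}^{(j)}$ already covered, cascading all the way to the leaf, which itself can only be R2-covered from $x_\ell^{(j)}$ --- a circular prerequisite that never bootstraps. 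Hence the covered portion of $P_j$ at each round is always a prefix $x_1^{(j)},\ldots,x_s^{(j)}$ whose length $s$ grows by at most one per round. Since this prefix is empty before round~$1$, we get $s\leq\ell$ at round~$\ell$, and so $x_{\ell+1}^{(j)}\notin\npd{\ell}{S}$. Because each non-central node lies in exactly one leg, pigeonhole gives that any $S$ with $|S|<m$ leaves some leg empty of picks, whence $\Optl{\ell}{G}\geq m$.

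Combining the bounds yields $\Optl{\ell}{G}/\Optl{\ell+1}{G}\geq m=\Theta(n/\ell)$, since $n=m(\ell+1)+1$ gives $m\leq n/\ell$ and $m\geq n/(4\ell)$ for $\ell\geq 1$. The technical nuisance to track in the induction is the case $v\notin S$: then $x_1^{(j)}$ can only be covered via R2 from $v$, which additionally requires all of $X_1\setminus\{x_1^{(j)}\}$ to be covered first, so the prefix remains empty past round~$1$; this only strengthens the bound and does not change the qualitative argument.
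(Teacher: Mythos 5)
Your proof is correct and follows essentially the same route as the paper: the same spider graph with $m$ legs of length $\ell+1$, the same bounds $\Optl{\ell+1}{G}=1$ and $\Optl{\ell}{G}=m$, and the same pigeonhole/distance idea for the lower bound (a leg with no picked node cannot have its far endpoint reached within $\ell$ rounds). The only differences are cosmetic: you certify the upper bound with $S=X_1$ where the paper uses the middle nodes of the legs, and you spell out the one-layer-per-round propagation argument that the paper compresses into a single sentence about distance.
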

\begin{proof}
Consider the graph $G$ that is obtained by taking $m$ paths, $P_1,
\ldots, P_m$, each of length $\ell+1$ that share a common node $v$
as shown in Figure \ref{Fig:Example} (with $k=\ell+1$). Note that
$G$ has $n=m\cdot(\ell+1)+1$ nodes. It is easy to check that $v$ can
power dominate the entire graph in exactly $\ell+1$ parallel rounds;
in the first round $v$ and $X_1$ are power dominated, in the second
round $X_2$ is power dominated, and so on. This shows that
$\Optl{\ell+1}{G}=1$. Next we prove that the size of the optimal
solution for $\ell$-round PDS is at least
$m=\tfrac{n-1}{\ell+1}=O(\frac{n}{\ell})$. For the sake of
contradiction assume that it is not, so there is an optimal solution
$S^*$ such that $\card{S^*}< m$. This implies that there is a path
$P_i$ that has no node in  $S^*$ except possibly the node $v$. This
means that the last node $u_i$ on the path $P_i$ (in $X_k$) cannot
be power dominated since there are only $\ell$ parallel rounds and
the distance between $v$ and $u_i$ is $\ell+1$. This contradicts the
fact that $S^*$ is a feasible solution for $\ell$-round PDS, so
$\Optl{\ell}{G}\geq m$. Also note that by taking $S$ to be the set
of middle nodes of the paths $P_1, \ldots, P_m$ we can power
dominate all of the nodes in $\ell$ rounds (in fact, in
$\frac{\ell}{2}$ rounds). This shows that $\Optl{\ell}{G}=m$, and
therefore the ratio between $\Optl{\ell}{G}$ and $\Optl{\ell+1}{G}$
is $m=O(\frac{n}{\ell})$.
\end{proof}

The $\ell$-round PDS problem can be generalized as follows. Given a
subset $V'$ of the nodes, find a minimum size set of nodes that
power dominates the nodes in $V'$ in at most $\ell$ parallel rounds.
We will use this generalization in our PTAS in Section
\ref{Sec:lPDS}.
\begin{definition} \emph{(Generalized $\ell$-round
PDS)}\label{Def:gen-lPDS} An instance of the \emph{generalized
$\ell$-round PDS} problem is given by a pair $\langle G,V'\rangle$,
where $G=(V,E)$ is an undirected graph and $V'\subseteq V$, and the
goal is to find a minimum size set of nodes $S$ such that
$V'\subseteq \npd{\ell}{S}$.
\end{definition}

\section{Dynamic programming for $\ell$-round PDS\label{Sec:Dynprog}}
In this section we provide a dynamic programming algorithm to solve
the generalized $\ell$-round PDS problem optimally. Our dynamic
programming algorithm is based on tree decompositions
\cite{Bib:DI00}. The running time of this algorithm is exponential
both in the tree-width of the given graph and the logarithm of the
parameter $\ell$. Therefore, we obtain a polynomial time algorithm
for graphs of bounded tree-width. Our dynamic programming algorithm
is based on our reformulation for the $\ell$-round PDS problem. This
reformulation is an extension of a new formulation for PDS
introduced by Guo et al.\ \cite{Bib:GUNR05}. Our dynamic programming
is similar to the dynamic programming for the \textsc{Dominating
Set} problem given in \cite{Bib:ABFKN02} and the dynamic programming
for PDS given in \cite{Bib:GUNR05}.
\begin{definition}
\label{Def:TWD} A tree decomposition of a graph $G=(V,E)$ is a pair
$\langle \set{X_i\subseteq V\ |\ i\in I},T\rangle$, where $T=(I,F)$
is a tree, satisfying the following properties:
\begin{enumerate}
\item $\bigcup_{i\in I}{X_i}=V;$
\item For every edge $\set{u,v}\in E$ there exists an $i\in I$ such
that $\set{u, v}\subseteq X_i;$
\item For all $i,j,k\in I$, if $j$ is on the unique path from $i$
to $k$ in $T$, then we have: $X_i\cap X_k\subseteq X_j.$
\end{enumerate}
The \emph{width} of $\langle \set{X_i\ |\ i\in I}, T\rangle$ is
defined as $max_{i\in I}{\card{X_i}}-1$. The \emph{tree-width} of
$G$ is defined as the minimum width over all tree decompositions.
The nodes of $T$ are called \emph{$T$-nodes}, and each $X_i$ is
called a \emph{bag}.
\end{definition}
For designing a dynamic programming algorithm based on tree
decomposition it is usually easier to work with \emph{nice tree
decompositions} \cite{Bib:KL94} that have a simple structure defined
as follows.
\begin{definition}
A tree decomposition $\langle \set{X_i| i\in I},T\rangle$ is called
a \emph{nice tree decomposition} if the following conditions are
satisfied:
\begin{enumerate}
\item $T$ is rooted, at $r$, and every node of $T$ has at most 2
children;
\item If a node $i\in I$ has two children $j$ and $k$, then $X_i=X_j=X_k$
(in this case $i$ is called a \textsc{Join Node});
\item If a node $i$ has one child $j$, then either of the following
holds:
\begin{enumerate}
\item $X_j\subset X_i$ and $\card{X_i\setminus X_j}=1$ (in this case $i$ is
called an \textsc{Insert Node}).
\item $X_i\subset X_j$ and $\card{X_j\setminus X_i}=1$  (in this case $i$ is called
a \textsc{Forget Node}).
\end{enumerate}
\end{enumerate}
\end{definition}
A tree decomposition can be transformed into a nice tree
decomposition \cite{Bib:KL94}. Therefore, to design a dynamic
programming algorithm based on tree decomposition, we can assume
that we are given a nice tree decomposition of the input graph.
\begin{lemma}[\cite{Bib:KL94}] \label{Lem:Nice}Given a tree decomposition of a graph $G$ that has
width $k$ and $O(n)$ nodes, where $n$ is the number of nodes of $G$,
a nice tree decomposition of $G$ that also has width $k$ and $O(n)$
nodes can be found in time $O(n)$.
\end{lemma}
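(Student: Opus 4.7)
The plan is to transform the given tree decomposition $\langle \{X_i \mid i \in I\}, T\rangle$ into a nice one in a few local rewriting passes, each of which preserves both the width bound and the linear size bound. First I would root $T$ at an arbitrary $T$-node $r$ and orient all edges away from $r$, so that the notions of parent and child are well-defined.

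Pass 1 (binarization). For every $T$-node $i$ with $d>2$ children $c_1,\dots,c_d$, I would replace $i$ by a small binary tree of $d-1$ duplicate copies of the bag $X_i$, hanging $c_1$ and $c_2$ off one copy, $c_3$ off the next, and so on. Since we add at most $d-2$ new $T$-nodes per original node, and $\sum_i (d_i-2)$ is bounded by the number of edges of $T$, this pass only adds $O(n)$ new $T$-nodes, keeps every bag unchanged (so width stays $k$), and the three tree-decomposition properties are trivially preserved because every new bag equals $X_i$.

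Pass 2 (join normalization). After Pass 1 every node has at most two children. For every node $i$ with exactly two children $j$ and $k$, I would insert two new $T$-nodes $j'$ and $k'$ with bags $X_{j'}=X_{k'}=X_i$ between $i$ and its children (so $j$ is the only child of $j'$ and $k$ is the only child of $k'$). Then $i$ becomes a proper \textsc{Join Node}. This adds at most $2(n-1)$ nodes.

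Pass 3 (edge refinement into insert/forget chains). For every remaining parent--child edge $(p,q)$ with $X_p \neq X_q$, let $A = X_p\setminus X_q$ and $B = X_q\setminus X_p$. I would replace this single edge by a path that first \emph{forgets} the elements of $B$ one at a time (each intermediate bag is a subset of $X_q$, so of size at most $k+1$), arriving at the intersection bag $X_p\cap X_q$, and then \emph{inserts} the elements of $A$ one at a time (each intermediate bag is a subset of $X_p$). Every intermediate node has exactly one child and differs from it by a single element, hence is a valid \textsc{Insert Node} or \textsc{Forget Node}. The length of this chain is $|A|+|B|\leq 2(k+1)$, and properties 1--3 of Definition \ref{Def:TWD} are easy to verify for the new bags because each intermediate bag lies between $X_p\cap X_q$ and $X_p\cup X_q$, so any node of $G$ appearing in them still occupies a contiguous subtree and every edge of $G$ still has an endpoint-pair sitting in some bag. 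The total number of added nodes is at most $2(k+1)$ per edge of (the binarized) $T$, which is $O(n)$ once $k$ is treated as a constant of the decomposition.

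The resulting decomposition satisfies all requirements of a nice tree decomposition: every node has at most two children, two-child nodes are join nodes by construction, and one-child nodes differ from their child in exactly one element. The width never increases since the only new bags are either copies of existing bags or subsets of the union of two adjacent old bags restricted along the chain described above. Each pass is a local rewriting and can be implemented by a single traversal of $T$, giving total running time $O(n)$. The main subtlety to check carefully is Pass 3, specifically the order of forgets-before-inserts: doing them in the opposite order would temporarily create bags of size up to $|X_p|+|B|$, potentially exceeding $k+1$, so the ordering is what guarantees the width bound. Once this is verified, the lemma follows.
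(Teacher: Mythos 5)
The paper offers no proof of Lemma \ref{Lem:Nice}; it is quoted directly from Kloks \cite{Bib:KL94}, and your three-pass construction --- binarize high-degree nodes with duplicate bags, pad every two-child node so that both children carry the bag $X_i$, then refine each remaining parent--child edge into a forget-then-insert chain --- is exactly the standard argument behind that citation, including the one genuinely delicate point, namely that forgetting the elements of $X_q\setminus X_p$ \emph{before} inserting those of $X_p\setminus X_q$ is what keeps every intermediate bag inside either $X_p$ or $X_q$ and hence of size at most $k+1$. Two small technicalities remain. First, after Passes 1--2 there can survive one-child edges $(p,q)$ with $X_p=X_q$ (for instance when a padding bag $X_{j'}=X_i$ sits above a child $j$ that already had $X_j=X_i$, or when the input decomposition contains adjacent equal bags); such a node is neither an \textsc{Insert Node} nor a \textsc{Forget Node} under the paper's definition, so these edges should simply be contracted --- a trivial fix, but your Pass 3 as written only touches edges with $X_p\neq X_q$ and so does not handle them. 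Second, Pass 3 adds up to $2(k+1)$ nodes per tree edge, so the output has $O(kn)$ nodes and the construction runs in $O(kn)$ time; the stated $O(n)$ bounds therefore hold with constants depending on $k$, which you acknowledge and which is precisely the bounded-tree-width regime in which the paper invokes the lemma (Theorem \ref{Thm:LGPDSDyn} and Corollary \ref{Cor:LGPDSSDyn}).
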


Now we provide a reformulation, called \emph{timed-orientation}, for
(generalized) $\ell$-round PDS in terms of the orientation of the
edges and labeling of the nodes. This reformulation makes it
possible to design a dynamic programming algorithm to solve the
(generalized) $\ell$-round PDS problem optimally in polynomial time
on graphs of bounded tree-width. In this formulation we orient the
edges in order to show the direction that the propagation rule is
applied; if the node $w$ is power dominated by applying the
propagation rule on $v$ then we orient the edge $\set{v, w}$ from
$v$ toward $w$. Moreover, if $w$ is not power dominated through $v$
we will leave the edge $\set{v, w}$ undirected. In order to keep
track of the round in which every node is power dominated, we
introduce a time vector $\set{t_v: v\in V}$. The round in which $v$
is power dominated is denoted by $t_v$ and it can take any value
from the set $\set{0, \ldots, \ell}\cup\set{+\infty}$; the $+\infty$
is used to denote a node that will not be power dominated (this is
needed for generalized $\ell$-round PDS). Consider a directed edge
$(u,v)$. There are two cases: either $t_v=1$ or $t_v>1$. In the
former case $u$ should be in the (optimal) solution, since $t_v=1$
means $v$ is power dominated at the first round by applying the
domination rule. In the latter case $u$ and all of its neighbors
except $v$ (i.e\ $N[u]\setminus v$) should be power dominated before
we can apply the propagation rule. In fact, $v$ will be power
dominated right after the round where the last node in
$N[u]\setminus v$ is power dominated (see the property P5 in the
following definition).

\begin{definition} \emph{(valid timed-orientation)}
\label{Def:ValTOR} Let $\langle G,V'\rangle$ be an instance of the
generalized $\ell$-round PDS problem where $G=(V,E)$ is a graph and
$V'\subseteq V$ is a subset of nodes. A \emph{valid
timed-orientation} for $\langle G,V'\rangle$ is a graph
$G_o=(V,E_d\cup E_u)$ such that for every $\set{u,v}\in E$ either
there is a directed edge $(u,v)$ or $(v,u)$ in $E_d$ or an
undirected edge $\set{u,v}$ in $E_u$, together with the time vector
$\set{t_v:v\in V}$ (possible values for $t_v$ are $\set{0, 1,
\ldots, \ell}\cup\set{+\infty}$) that satisfies the following
properties:
\begin{itemize}
\item[\emph{(P1)}] $\forall v\in V': 0\leq t_v\leq \ell$,
\item[\emph{(P2)}] $\forall v\in V: 1\leq t_v\leq \ell \Rightarrow d^-(v)=1$,
\item[\emph{(P3)}] $\forall v\in V: t_v=+\infty\Rightarrow d^-(v)=d^+(v)=0$,
\item[\emph{(P4)}] $\forall v\in V: t_v=0\Rightarrow d^-(v)=0$,
\item[\emph{(P5)}] $\forall (u,v)\in E_d: t_v=\left\{
\begin{array}{lr}
1 & \mbox{if } t_u=0\\
1+\max\set{t_w: w\in N[u]-v}& \emph{otherwise}.\\
\end{array}\right.$
\end{itemize}
The set $O=\set{v\in V: t_v=0}$ is called the \emph{origin} of the
valid timed-orientation.
\end{definition}
Now we show that the existence of a valid timed-orientation with $S$
as the origin is equivalent to having $S$ as a feasible solution to
the $\ell$-round PDS problem.
\begin{lemma}
Let $G=(V,E)$ be an undirected graph, and $\langle G,V'\rangle$ be
an instance of the generalized $\ell$-round PDS problem. Then
$S\subseteq V$ is the origin of a valid timed-orientation if and
only if $V'\subseteq \npd{\ell}{S}$.
\end{lemma}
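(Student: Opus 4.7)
The plan is to prove the two directions of the equivalence by matching the time labels $t_v$ with the round at which $v$ enters $\npd{\ell}{S}$.

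For the forward direction I would induct on $k \in \{1, \ldots, \ell\}$ to show that $\{v : t_v \leq k\} \subseteq \npd{k}{S}$. In the base case $k = 1$, nodes with $t_v = 0$ lie in the origin $S$ and are therefore in $\npd{1}{S}$; for $t_v = 1$, P2 yields a unique in-edge $(u,v) \in E_d$, and since $u \in N[u] - v$ the ``otherwise'' clause of P5 would force $t_v \geq 1 + t_u \geq 2$, so the ``$t_u = 0$'' clause must apply, placing $u \in S$ and hence $v \in N[u] \subseteq \npd{1}{S}$. In the inductive step at $k \geq 2$, the unique in-edge $(u,v)$ together with P5 gives $t_w \leq k - 1$ for every $w \in N[u] - v$; the induction hypothesis supplies $N[u] - v \subseteq \npd{k-1}{S}$, and rule (R2) puts $v$ into $\npd{k}{S}$. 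Setting $k = \ell$ and invoking P1 gives $V' \subseteq \npd{\ell}{S}$.

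For the reverse direction I would construct a timed-orientation explicitly: set $t_v = 0$ on $S$, $t_v = +\infty$ on $V \setminus \npd{\ell}{S}$, and $t_v = \min\{k \geq 1 : v \in \npd{k}{S}\}$ otherwise. For each $v$ with $1 \leq t_v \leq \ell$, pick a ``witness'' neighbor $u$ justifying $v \in \npd{t_v}{S}$---any $u \in N(v) \cap S$ if $t_v = 1$, and otherwise any $u \in N(v)$ with $N[u] - v \subseteq \npd{t_v - 1}{S}$---and orient $(u,v) \in E_d$; leave all other edges undirected. The crucial observation is that every chosen witness satisfies $t_u < t_v$. This at once prevents any edge from being oriented both ways, gives $d^-(v) = 1$ for every finite-time $v$ (P2) and $d^-(v) = 0$ for nodes in $S$ (P4). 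Moreover, a $+\infty$-node lies neither in $S$ nor in any $\npd{k}{S}$ with $k \leq \ell$, so it is never picked as a witness and therefore $d^+(v) = 0$, completing P3; P1 is just the hypothesis re-expressed.

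The main obstacle will be verifying P5 on the constructed orientation. For an edge $(u,v) \in E_d$ with $t_v \geq 2$, the upper bound $\max_{w \in N[u] - v} t_w \leq t_v - 1$ is built into the witness choice. The matching lower bound is the delicate part and relies on the minimality in the definition of $t_v$: if $\max_{w \in N[u] - v} t_w \leq t_v - 2$, then $N[u] - v \subseteq \npd{t_v - 2}{S}$ and rule (R2) would already place $v$ in $\npd{t_v - 1}{S}$, contradicting $t_v = \min\{k : v \in \npd{k}{S}\}$. Hence the maximum equals $t_v - 1$ exactly and P5 holds; the case $t_v = 1$ is immediate since then $t_u = 0$. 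Together these two directions establish the equivalence.
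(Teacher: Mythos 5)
Your proposal is correct and follows essentially the same route as the paper: one direction constructs the timed-orientation by orienting a witness edge into each power-dominated node and setting $t_v$ to the first round in which $v$ enters $\npd{t_v}{S}$, and the other direction inducts on the round index to show $\set{v: t_v\leq k}\subseteq\npd{k}{S}$. The only difference is one of detail rather than substance -- the paper dismisses the verification of P1--P5 as ``straightforward to check,'' whereas you carry it out, including the minimality argument needed for the exact equality in P5.
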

\begin{proof}
Suppose $S\subseteq V$ power dominates $V'$ in at most $\ell$
parallel rounds. Now we construct a valid timed-orientation with $S$
as the origin. We orient the edges in the same way as the
propagation rule applies. Consider an edge $\set{u, v}$ and assume
that $v$ is power dominated by applying power domination rules to
$u$. Then we orient the edge $\set{u, v}$ from $u$ toward $v$. Let
$E_d$ be the set of oriented edges and let $E_u$ be the rest of the
edges. This defines the graph $G_o=(V,E_d\cup E_u)$. Now we define
the time vector. First, for any node $v$ that is not power
dominated, set $t_v=+\infty$. Second, for every $v\in S$, set
$t_v=0$. Finally, define $t_v$ for the rest of the nodes as
$t_v=\min{\set{r\geq 1: v\in\npd{r}{S}}}$. It is straightforward to
check that the above orientation and the time vector satisfy all of
the properties P1 to P5 in the definition of the valid
timed-orientation. Therefore, there is a valid timed-orientation
with $S$ as its origin.

Now assume that $G$ has a valid timed-orientation with $S$ as the
origin. So there is an oriented graph $G_o=(V,E_d\cup E_u)$ and a
time vector $\set{t_v:v\in V}$ that satisfy the properties P1 to P5
of the valid timed-orientation. Define $V_r=\set{v\in V:0\leq
t_v\leq r}$. We now prove by induction that $V_r\subseteq\npd{r}{S}$
which implies that $V_{\ell}\subseteq\npd{\ell}{S}$. First note that
this proves the lemma, since any node $v\in V'$ has $0\leq
t_v\leq\ell$ by the property P1. Note that any node $v$ with $t_v=0$
is in $S$, so it is immediately power dominated. Also any node $v$
with $t_v=1$ is power dominated (by the domination rule) in the
first round, since by P2 it has an incoming directed edge, say
$(w,v)$, and by P5 the node $w$ is in S. Hence, $V_1\subseteq
\npd{1}{S}$. Now assume that the induction hypothesis holds for
$r=k$ (where $k\geq 1$), that is $V_k\subseteq \npd{k}{S}$.  We will
show that it also holds for $r=k+1$. Consider a node $v$ such that
$t_v=k+1$. By P2 it has exactly one incoming edge, say $(u,v)\in
E_d$. The node $u$ has $t_u\geq 1$ (since if $t_u=0$ then P5 implies
$t_v=1$), so by the property P5 any $w\in N[u]-v$ has $t_w\leq k$.
Therefore, $N[u]\setminus\set{v}\subseteq \npd{k}{S}$. This is
correct for any $v$ with $t_v=k+1$, so $X=\set{v\in V:
t_v=k+1}\subseteq\set{v:(u,v)\in E,
N[u]\setminus\set{v}\subseteq\npd{k}{S}}$. Therefore, by the
definition of the parallel propagation  $X$ can be power dominated
in one parallel round, so we have $V_{k+1}=V_k\cup
X\subseteq\npd{k+1}{S}$. This proves the induction step and
completes the proof.
\end{proof}

\begin{theorem}
Given a pair $\langle G,V'\rangle$ where $G=(V,E)$ is a graph with
tree-width $k$ and $V'\subseteq V$, a minimum size set $S\subseteq
V$ such that $V'\subseteq \npd{\ell}{S}$ can be obtained in time
$O(c^{m_e+k\log{\ell}}\cdot \card{V})$, for some global constant
$c$, where $m_e$ is the maximum number of edges that a bag can have.
\label{Thm:LGPDSDyn}
\end{theorem}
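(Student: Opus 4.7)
The plan is a dynamic programming algorithm over a nice tree decomposition of $G$, exploiting the valid timed-orientation reformulation established in the preceding lemma. First, I invoke Lemma \ref{Lem:Nice} to convert the given width-$k$ tree decomposition into a nice one with $O(|V|)$ bags. I then process the tree bottom-up, maintaining at each $T$-node $i$ a table indexed by \emph{bag states} on $X_i$; for each bag state, the entry records the minimum number of origin nodes used inside the subtree rooted at $i$ that is consistent with the prescribed boundary state, or $+\infty$ if no feasible extension exists.

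A bag state at $X_i$ has three components: (i) a time label $t_v \in \{0, 1, \ldots, \ell, +\infty\}$ for each $v \in X_i$; (ii) an orientation status (undirected, $u\to v$, or $v\to u$) for each edge of $G$ with both endpoints in $X_i$; and (iii) for each $v \in X_i$, an auxiliary value $\mu_v \in \{0, 1, \ldots, \ell, +\infty\}$ that records $\max\{t_w : w \in N(v),\ w \text{ already forgotten in the subtree at } i\}$. The role of $\mu_v$ is to make property P5 verifiable at the moment a directed edge's second endpoint is forgotten: combining $\mu_u$ with $t_u$ and with the labels of $u$'s in-bag neighbors recovers $\max\{t_w : w \in N[u]-v\}$, which is all that P5 requires. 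The transitions then follow the standard nice-tree-decomposition template. Leaf bags initialize to every feasible local timed-orientation. Insert nodes enumerate all $O(\ell)$ labels and all $3^{O(1)}$ orientation choices for the new vertex and its edges to the bag. Forget nodes discharge properties P1--P5 locally for the outgoing vertex $v$ (P1 for membership in $V'$, P2--P4 syntactically from the recorded status, and P5 for the unique incoming edge to $v$), and then update $\mu_u$ for each remaining neighbor $u$ of $v$ using $t_v$. Join nodes match identical bag states from the two children, sum their counts (subtracting $|S \cap X_i|$ once to avoid double counting), and combine the $\mu_v$ entries by maximum. The answer is read off at the root as the minimum entry over bag states satisfying all as-yet-uncharged constraints.

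Correctness is a routine induction over the nice decomposition: using the preceding lemma as the global equivalence between origins of valid timed-orientations and feasible generalized $\ell$-round PDS solutions, one verifies that each table entry equals the true subtree optimum for that boundary. For the running time, the state space at each bag has at most $(\ell+2)^{k+1}$ time-label vectors, $3^{m_e}$ orientation patterns, and $(\ell+2)^{k+1}$ auxiliary vectors, so the table size is bounded by $c^{m_e + k \log \ell}$ for a suitable absolute constant $c$. Each local transition runs in time at most quadratic in this table size (the join node being the worst case), and there are $O(|V|)$ bags by Lemma \ref{Lem:Nice}; absorbing the polynomial overhead into $c$ yields the stated $O(c^{m_e + k \log \ell}\cdot|V|)$ bound.

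The main obstacle I anticipate is the bookkeeping around property P5. Because P5 on a directed edge $(u,v)$ couples $t_v$ to the \emph{full} closed neighborhood of $u$ in $G$, and neighbors of $u$ may have been forgotten long before the constraint becomes checkable, the state must carry a compact summary of the forgotten history. The auxiliary register $\mu_v$ is precisely this summary: one value of $\log(\ell+2)$ bits per bag vertex, which keeps the budget at $c^{k \log \ell}$ rather than ballooning to $c^{\ell k}$. Once this device is in place, the remainder of the dynamic program is a direct adaptation of the approach of Guo et al.~\cite{Bib:GUNR05} for PDS, augmented by the time labels that track the round of propagation.
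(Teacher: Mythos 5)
Your overall plan coincides with the paper's: a bottom-up dynamic program over a nice tree decomposition whose states encode partial valid timed-orientations, and your auxiliary register $\mu_v$ is exactly the paper's $s_y(v)$ --- the one genuinely new state component needed to make P5 checkable after neighbors of $u$ have been forgotten. However, as described your state is too coarse in two places, and the transitions you give would accept infeasible orientations. First, properties P2--P4 are degree constraints on $d^-(v)$ and $d^+(v)$ ranging over \emph{all} edges incident to $v$, but by the time $v$ is forgotten the orientations of its edges to previously forgotten neighbors are no longer recorded anywhere in your state; checking ``P2--P4 syntactically from the recorded status'' therefore cannot detect, for example, a second incoming directed edge at $v$ coming from a neighbor forgotten earlier. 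The paper's state carries two extra counters per bag vertex, $s_{-}(v)\in\set{0,1}$ and $s_{+}(v)\in\set{0,1,2}$, summarizing the directed edges between $v$ and the forgotten part precisely so that these degree checks can be discharged.

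Second, and more seriously, your join rule (``match identical bag states, sum counts, combine $\mu$ by maximum'') does not control \emph{which} subtree supplies the unique incoming directed edge that justifies a label $t_v=a\geq 1$. With identical labels required on both sides you cannot distinguish ``justified in the left child,'' ``justified in the right child,'' and ``to be justified above,'' so the join either permits two incoming edges (one per child, violating P2) or cannot represent a vertex whose justifying edge lies in only one child's subgraph. The paper resolves this with the hatted labels $\hat{a}$ (time promised but the witnessing propagation not yet present in $G_i$) together with join condition J3, which forces exactly one of the two children to carry the unhatted label. Both fixes cost only a constant factor, respectively an $O(\ell)$ factor, of extra values per bag vertex, so the $c^{m_e+k\log{\ell}}$ bound survives; but without them the induction asserting that each table entry equals the true subtree optimum for its boundary state does not go through.
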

\begin{proof}
Please refer to Appendix \ref{Apx:DynProg} for the details of the
dynamic programming.
\end{proof}

In general graphs the number of edges in a bag is at most $\left(\begin{smallmatrix}k+1\\
2\end{smallmatrix}\right)$ but in planar graphs, it is linear in the
number of nodes, where $k$ denotes the tree-width. Therefore, the
above theorem implies the following corollary.
\begin{corollary} \label{Cor:LGPDSSDyn}
Given a pair $\langle G,V'\rangle$ where $G=(V,E)$ is a planar graph
with tree-width $k$ and $V'\subseteq V$, a minimum size set
$S\subseteq V$ such that $V'\subseteq \npd{\ell}{S}$ can be obtained
in time $O(c^{k\log{\ell}}\cdot \card{V})$, for some global constant
$c$.
\end{corollary}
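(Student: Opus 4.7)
The plan is to derive the corollary as a direct consequence of Theorem \ref{Thm:LGPDSDyn} by bounding the quantity $m_e$ (the maximum number of edges in a single bag) when the input graph is planar. The theorem gives a running time of $O(c^{m_e+k\log\ell}\cdot |V|)$, so the only thing I need to establish is that $m_e=O(k)$ in the planar case, and then absorb the $O(k)$ contribution into the $k\log\ell$ exponent by enlarging the constant $c$.

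To bound $m_e$, I would fix a bag $X_i$ of a (nice) tree decomposition, which by definition has $|X_i|\le k+1$. By property~(2) of Definition~\ref{Def:TWD}, every edge of $G$ whose endpoints both lie in $X_i$ is present in $G[X_i]$, so the ``edges of the bag'' are exactly the edges of the induced subgraph $G[X_i]$. Since $G$ is planar, every induced subgraph is planar as well, and by the classical edge bound for simple planar graphs we have $|E(G[X_i])|\le 3|X_i|-6\le 3(k+1)-6=3k-3$. Hence $m_e\le 3k-3=O(k)$.

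Substituting this into the bound of Theorem~\ref{Thm:LGPDSDyn} gives a running time of $O(c^{3k-3+k\log\ell}\cdot |V|)$. Writing $c^{3k-3+k\log\ell}=c^{3k-3}\cdot c^{k\log\ell}\le (c^{4})^{k\log\ell}$ for all $\ell\ge 2$ (and noting that $\ell=1$ is the trivial case of the dominating set problem where the term $k\log\ell$ can be replaced by $k$), we can absorb the additive $3k-3$ into the exponent at the cost of replacing the global constant $c$ by a larger global constant $c'$. This yields the claimed bound $O(c'^{k\log\ell}\cdot|V|)$.

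I do not expect any genuine obstacle here: the only content beyond invoking the theorem is the Euler-formula bound for planar graphs, and the re-absorption of constants into the exponent is purely arithmetic. The main thing to be careful about is stating the argument so that the constant $c'$ is truly global (independent of $k$ and $\ell$), which is immediate from the inequality $c^{3k-3}\le c^{3(k\log\ell)}$ valid whenever $\ell\ge 2$.
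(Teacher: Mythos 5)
Your proposal is correct and follows essentially the same route as the paper, which justifies the corollary with the one-line observation that in a planar graph the number of edges in a bag is linear in the number of its nodes, so $m_e=O(k)$ and the additive term can be absorbed into the exponent. Your version merely makes explicit the Euler-formula bound $|E(G[X_i])|\leq 3|X_i|-6$ and the arithmetic for enlarging the global constant, which is consistent with the paper's intent.
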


\section{$\ell$-round PDS\label{Sec:lPDS} on planar graphs}
In this section we present a PTAS (polynomial time approximation
scheme\footnote{A \emph{polynomial time approximation scheme} (PTAS)
is an algorithm that given any fixed $\epsilon>0$ provides a
solution with cost within $(1+\epsilon)$ times the optimal value in
polynomial time.}) for the $\ell$-round PDS problem on planar graphs
when $\ell=O(\frac{\log{n}}{\log{\log{n}}})$. Baker's PTAS
\cite{Bib:BA94} for the \textsc{Dominating Set} problem on planar
graphs is a special case of our result with $\ell=1$, but there are
no previous results of this type for $\ell>1$. Our PTAS works in the
same fashion as Baker's PTAS, but our analysis and proofs are novel
contributions of this paper. Demaine and Hajiaghayi
\cite{Bib:DEHA05} recently used bidimensionality theory to design
PTASs for some variants of the \textsc{Dominating Set} problem on
planar graphs (such as \textsc{Connected Dominating Set}), but their
methods do not apply to the $\ell$-round PDS problem because the
relevant parameter is not bidimensional. We also have the following
$\np$-hardness result for $\ell$-round PDS on planar graphs.
\begin{proposition}\label{Thm:NP-lPDS} For any $\ell\geq 1$
the $\ell$-round PDS problem is an $\np$-hard problem even on planar
graphs.
\end{proposition}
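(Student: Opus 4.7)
The plan is to split the analysis by the value of $\ell$. For $\ell=1$, the $\ell$-round PDS problem is by definition identical to the \textsc{Dominating Set} problem, since only rule (R1) can fire in a single round. Because \textsc{Dominating Set} on planar graphs is a classical $\np$-hard problem \cite{Bib:GAJO79}, the case $\ell=1$ follows immediately.

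For $\ell\geq 2$, I would give a polynomial-time reduction from \textsc{Dominating Set} on planar graphs to $\ell$-round PDS on planar graphs. Given a planar $G=(V,E)$, construct a planar graph $G'$ by attaching to every $v\in V$ a small planarity-preserving ``propagation-killing'' gadget; the first gadget to try is a pair of pendant leaves $a_v,b_v$ adjacent only to $v$. The key structural observation would be the following circular-dependency lemma: in any feasible $\ell$-round PDS $S$ of $G'$ and any $v\in V$, at least one of $\set{v,a_v,b_v}$ lies in $S$. If none did, rule (R1) would fail to power dominate either pendant, while applying (R2) to propagate to $a_v$ through $v$ would require $b_v\in\npd{r-1}{S}$, which in turn would require $a_v\in\npd{r-2}{S}$ --- an unresolvable circularity. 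Using this lemma I would construct mutual translations between dominating sets of $G$ and feasible $\ell$-round PDSs of $G'$ whose sizes match up to an explicit additive correction, yielding the $\np$-hardness of $\ell$-round PDS on planar graphs.

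The principal obstacle will be ensuring the size correspondence is sharp enough to transfer $\np$-hardness of $\gamma(G)$ to $\Optl{\ell}{G'}$, rather than collapsing to a trivial bound like $\Optl{\ell}{G'}=n$ that is independent of $\gamma(G)$. The bare two-pendant gadget already implies $\Optl{\ell}{G'}\geq n$, so by itself it can destroy the \textsc{Dominating Set} signal on the upper-bound side. A refined gadget is likely needed: for instance, replacing each $v$ by a small planar subgraph whose only cheap $\ell$-round cover requires one specific representative vertex, or padding each original edge with roughly $\ell$ subdivision vertices to prevent long-range propagation from bypassing gadgets within the $\ell$-round budget. The delicate interplay between the gadget's radius and the round parameter $\ell$ is what makes this step nontrivial; an inductive construction parameterized by $\ell$ (reducing from $(\ell-1)$-round PDS to $\ell$-round PDS on planar graphs via a single ``delay'' gadget per edge) is an appealing alternative that may simplify the analysis.
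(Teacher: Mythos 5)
Your case $\ell=1$ and your overall strategy for $\ell\geq 2$ --- reduce from planar \textsc{Dominating Set} by attaching a small planarity-preserving pendant gadget to each node and exploiting a circular-dependency obstruction to propagation --- is exactly the route the paper takes. However, the one concrete gadget you propose provably fails, and you do not close the gap that you yourself identify. With two pendant leaves $a_v,b_v$ attached to each $v$, your circular-dependency lemma forces at least one of $\set{v,a_v,b_v}$ into every feasible solution, so $\Optl{\ell}{G'}=\card{V(G)}$ for every input graph, independently of the domination number of $G$: taking all of $V(G)$ is always feasible and always optimal, and no hardness transfers. None of your candidate repairs is carried out; moreover, subdividing the original edges would distort the domination structure of $G$ itself (a dominating set of the subdivided graph bears no fixed relation to one of $G$), and the inductive reduction from $(\ell-1)$-round to $\ell$-round PDS is left entirely unspecified.

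The fix is simpler than the refinements you contemplate: attach to each $v\in V$ a \emph{single} pendant path $P_v$ of length $\ell-1$, identifying one endpoint with $v$ (a single pendant edge already suffices for correctness; the longer path only makes the optimal solution need exactly $\ell$ rounds). The key lemma is then not ``one solution node per gadget'' but the exact identity between $\Optl{\ell}{G'}$ and the minimum dominating set size of $G$. For the upper bound, a dominating set $S$ of $G$ power dominates all of $V$ in the first round, after which each $v\in V$ has exactly one undominated neighbor (the first node of $P_v$) and pushes down its own path one level per round, finishing in $\ell$ rounds. For the lower bound, one may assume an optimal $S'$ contains only original nodes (slide any path node to its anchor in $V$); if $S'$ failed to dominate some $v\in V$ in the first round, then $v$ would have to be reached by rule (R2) applied at some neighbor $u\in V\setminus S'$, but $u$ is also the only possible source of power domination for its own pendant path, and (R2) tolerates only one undominated neighbor of $u$ at a time --- each of $v$ and the first node of $P_u$ waits for the other. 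This is precisely your circularity argument, applied to the correct gadget: it forces $S'$ to be a dominating set of $G$ rather than forcing a gadget node into $S'$.
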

\begin{proof}
We use a modification of the reduction that has been used to prove
that the PDS problem on planar graphs is $\np$-hard
\cite{Bib:KNMRR04,Bib:GUNR05}. Please refer to Appendix
\ref{Apx:Proof} for more details.
\end{proof}

Now we describe our PTAS for the $\ell$-round PDS problem on planar
graphs, when $\ell$ is small. First we provide some useful
definitions and notations. Consider an embedding of a planar graph
$G$. We define the nodes at level $i$ denoted by $L_i$ as follows
\cite{Bib:BA94}. Let $L_1$ be the set of nodes on the exterior face
in the given embedding of $G$. For $i>1$, the set $L_i$ is defined
as the set of nodes on the exterior face of the graph induced on
$V\setminus \cup_{j=1}^{i-1}{L_j}$. We denote by
$L(a,b)=\cup_{i=a}^{b}{L_i}$ the set of nodes at levels $a$ through
$b$. A planar graph is called \emph{$k$-outerplanar} if it has an
embedding where no node is at level greater than $k$. For example,
consider the graph in Figure \ref{Fig:Bad-Part}. Clearly, the graph
is a $2$-outerplanar graph. The set $L_1=\set{u_1, u_2, \ldots,
u_8}$ is the set of nodes at level $1$ and the set $L_2=\set{v_1,
v_2, \ldots, v_8}$ is the set of nodes at level $2$. Given a graph
$G=(V,E)$ and $V'\subseteq V$, we denote the subgraph induced on
$V'$ by $G[V']$.

Before describing our PTAS, let us look at Baker's PTAS
\cite{Bib:BA94} for the \textsc{Dominating Set} problem on planar
graphs. Given the parameter $\epsilon=\frac{1}{k}$, Baker's
algorithm finds a feasible solution with size within $(1+\epsilon)$
 times the optimal value. This algorithm considers $k$
different decompositions $\mathcal{D}_1, \ldots, \mathcal{D}_k$ of
the nodes of $G$ and then finds a feasible solution for each of
them.  The $i\mbox{th}$ decomposition consists of blocks with $k+1$
consecutive levels. The $j\mbox{th}$ block in decomposition
$\mathcal{D}_i$ contains nodes of levels $j k+i$ through $(j+1) k+i$
(note that each $\mathcal{D}_i$ is obtained from $\mathcal{D}_1$ by
shifting the levels). Also note that every two consecutive blocks in
a given decomposition share a common level. Next, the algorithm
solves the \textsc{Dominating Set} problem optimally on each block
of $\mathcal{D}_i$. This is possible since each block is
$(k+1)$-outerplanar; hence, it has tree-width at most $3(k+1)-1$ and
the \textsc{Dominating Set} problem can be solved optimally by
dynamic programming \cite{Bib:BA94} (the result in Section
\ref{Sec:Dynprog} for $\ell=1$ also shows this fact). Then, it takes
the union of the optimal solutions for the blocks in $\mathcal{D}_i$
to obtain a feasible solution for the \textsc{Dominating Set}
problem in the graph $G$. Let $S_i$ denote this feasible solution.
Finally, the algorithm outputs the solution that has minimum size,
i.e. $\min_{i=1,\ldots,k}{\card{S_i}}$, among all $k$
decompositions. It is not hard to argue that the size of this
solution is within $\tfrac{k+1}{k}=1+\epsilon$ times the optimal
value. The key property that is needed for this argument is the fact
that consecutive blocks share a common level.
\begin{figure}[htbp]
\begin{center}
\input{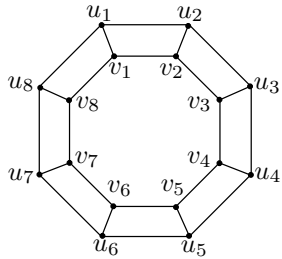}
\caption{A $2$-outerplanar graph} \label{Fig:Bad-Part}
\end{center}
\end{figure}

Now let us describe our PTAS informally. Consider a parameter $k$,
that is a function of the parameter $\ell$ and the approximation
factor ($1+\epsilon$); $k$ will be defined later in the formal
description of our algorithm. We decompose the graph in $k$
different ways, $\mathcal{D}_1, \ldots,\mathcal{D}_k$. In each
decomposition the graph is decomposed into blocks of $k+4\ell-2$
consecutive levels. The $j\mbox{th}$ block in $\mathcal{D}_i$ is
defined as $B_{i,j}=L(jk+i-2\ell+1,(j+1)k+i-1+2\ell-1)$. We denote
the $k$ middle levels of $B_{i,j}$ by $C_{i,j}=L(jk+i,(j+1)k+i-1)$.
In our PTAS, for each decomposition $\mathcal{D}_i$, we optimally
solve instances of the generalized $\ell$-round PDS problem
($\mathcal{I}_{i,j}=\langle G[B_{i,j}],C_{i,j}\rangle$) defined for
each block in $\mathcal{D}_i$. Note that each instance
$\mathcal{I}_{i,j}$ can be optimally solved by using the dynamic
programming algorithm given in Section \ref{Sec:Dynprog}. Let
$\mathcal{O}_{i,j}$ denote the optimal solution for this instance.
Then we take the union of the solutions corresponding to blocks in
$\mathcal{D}_i$, $\Pi_i=\cup_{j\geq 0}{\mathcal{O}_{i,j}}$. By doing
this for all $k$ decompositions we get $k$ different feasible
solutions for the original graph $G$. Finally, we choose the
solution with minimum size among these $k$ solutions. We will see
that the $2\ell-1$ extra levels around the $k$ middle levels and the
common levels between consecutive blocks plays an important role in
the feasibility and the near optimality of the final output of the
algorithm.

To make the role of common levels clear, consider an instance of
$4$-round PDS shown in Figure \ref{Fig:Bad-Part}. Assume that we
partition the planar graph into two levels. The first level is the
outer cycle and the second level is the inner cycle. It is easy to
check that the size of an optimal solution in any one of these
partitions is $1$. For example $\set{u_1}$ and $\set{v_5}$ are
optimal solutions for partition $1$ and partition $2$ respectively.
But if we consider the original graph $G$, it is straightforward to
check that $S=\set{u_1}\cup\set{v_5}$ is not a feasible solution for
the instance $G$ of $4$-round PDS; the set of nodes that can be
power dominated in at most four parallel rounds is
$\npd{4}{S}=\set{u_1, u_2, u_5, u_8, v_1, v_4, v_5, v_6}$.  Note
that the propagation rule was applied in the subgraph but not in the
original graph, so in the subgraph $u$ may have all but one node of
$N[u]$ in $\npd{i}{S}$ but this need not hold for the original graph
(see $u_8$, for example).

To prevent such a problem, we need to consider extra levels around
each block as we did in the blocks $B_{i,j}$ above. We will show
that a feasible solution, found by our algorithm, for the instance
$\mathcal{I}_{i,j}$ will power dominate at least its $k$ middle
levels in the original graph $G$. This implies that the union of the
solutions for the blocks in a given decomposition will be a feasible
solution for $G$, since the union of the $k$ middle levels of the
blocks covers all the nodes in $G$.

\begin{algorithm}[h]
\caption{PTAS for $\ell$-round PDS}
 \label{Alg:PTAS}
\begin{algorithmic}[1]
  \STATE Given a planar embedding of the graph $G$, and the
  parameter $0<\epsilon \leq 1$.
  \STATE Let $k=4\cdot\lceil\frac{\ell}{\epsilon}\rceil$.
  \FOR{$i=1$ to $k$}
  \FORALL{$j\geq 0$}
    \STATE Solve ``generalized'' $\ell$-round PDS on $\langle G[B_{i,j}],C_{i,j}\rangle$
    \STATE Let $\mathcal{O}_{i,j}$ be an optimal solution for $\langle
    G[B_{i,j}],C_{i,j}\rangle$
  \ENDFOR
  \STATE $\Pi_i=\cup_{j\geq 0}{\mathcal{O}_{i,j}}$
  \ENDFOR
  \STATE $r\leftarrow argmin\set{\card{\Pi_i}: i=1, \cdots, k}$
  \STATE Output $\Pi_O=\Pi_r$.
\end{algorithmic}
\end{algorithm}
\begin{theorem}\label{Thm:LPDS}
Let $\ell$ be a given parameter, where
$\ell=O(\frac{\log{n}}{\log{\log{n}}})$. Then Algorithm
\ref{Alg:PTAS} is a PTAS for the $\ell$-round PDS problem on planar
graphs.
\end{theorem}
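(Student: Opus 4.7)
The plan is to verify three things about Algorithm~\ref{Alg:PTAS}: (i) each $\Pi_i$ is a feasible $\ell$-round PDS for $G$; (ii) $\min_i\card{\Pi_i}\leq(1+\epsilon)\Optl{\ell}{G}$; and (iii) the running time is polynomial when $\ell=O(\log n/\log\log n)$. Both (i) and (ii) hinge on a single structural lemma that bounds the ``reach'' of any $\ell$-round propagation chain in the level decomposition; proving this lemma will be the main obstacle and is exactly what justifies the buffer of $2\ell-1$ extra levels on each side of $C_{i,j}$.

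To prove the lemma I would exploit the geometric fact that in Baker's level embedding of a planar graph every edge joins vertices whose levels differ by at most one. Unfolding the timed-orientation of Definition~\ref{Def:ValTOR}, any chain that power dominates some $v\in C_{i,j}$ in $t_v\leq\ell$ rounds alternates between a \emph{propagator} $u$ adjacent to the node it dominates (adding $1$ to the level-distance from $v$) and the \emph{other neighbors} $N[u]\setminus\{w\}$ that must already be dominated (adding another $1$). A straightforward induction shows that a propagator acting at round $t_v-j$ lies at level-distance at most $2j+1$ from $v$ and that the nodes it references lie at level-distance at most $2j+2$. Pushing $j$ up to $\ell-1$ yields: every vertex appearing anywhere in the chain has level-distance at most $2\ell-1$ from $v$, so all of them lie inside $B_{i,j}$; moreover every vertex actually used as a propagator has level-distance at most $2\ell-3$ and so sits in the interior of $B_{i,j}$, where $N_G[u]=N_{G[B_{i,j}]}[u]$ and R2 behaves identically in $G$ and in $G[B_{i,j}]$.

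With the lemma in hand, (i) and (ii) follow by symmetric arguments. For feasibility, the chain that witnesses domination of $C_{i,j}$ by $\mathcal{O}_{i,j}$ in $G[B_{i,j}]$ only uses interior propagators and only references vertices of $B_{i,j}$, hence remains valid in the whole graph $G$; since $\bigcup_j C_{i,j}=V$ the union $\Pi_i$ power dominates $V$ in $G$. For the approximation bound, applying the lemma in reverse to a global optimum $\Opt$ shows that every $\Opt$-vertex relevant to power dominating some $v\in C_{i,j}$ lies in $B_{i,j}$, so $\Opt\cap B_{i,j}$ is feasible for the generalized instance $\langle G[B_{i,j}],C_{i,j}\rangle$, giving $\card{\mathcal{O}_{i,j}}\leq\card{\Opt\cap B_{i,j}}$. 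A shifting/double-counting argument then closes the job: each vertex of $\Opt$ lies in roughly $k+4\ell-3$ blocks across the $k$ shifted decompositions, so $\sum_{i=1}^k\sum_j\card{\Opt\cap B_{i,j}}\leq(k+4\ell-3)\Optl{\ell}{G}$, and the best shift satisfies $\min_i\card{\Pi_i}\leq(1+(4\ell-3)/k)\Optl{\ell}{G}\leq(1+\epsilon)\Optl{\ell}{G}$ by the choice $k=4\lceil\ell/\epsilon\rceil$.

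For the running time, each block $B_{i,j}$ is $(k+4\ell-2)$-outerplanar and hence has tree-width $O(k+\ell)=O(\ell/\epsilon)$. Invoking Corollary~\ref{Cor:LGPDSSDyn} on every block of every decomposition gives total time $O(k\cdot c^{O((k+\ell)\log\ell)}\cdot n)$; substituting $\ell=O(\log n/\log\log n)$ and $k=O(\ell/\epsilon)$ makes the exponent $O((\log n)/\epsilon)$, so $c^{O((k+\ell)\log\ell)}=n^{O(1/\epsilon)}$ and the algorithm runs in polynomial time for every fixed $\epsilon$. Once the level-distance lemma is established, (i), (ii), and (iii) are essentially bookkeeping.
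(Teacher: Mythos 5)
Your proposal follows essentially the same route as the paper: the same two claims (feasibility of each block solution $\mathcal{O}_{i,j}$ in the whole graph $G$, and feasibility of $\Pi^*\cap B_{i,j}$ for the block instance), the same shifting/double-counting bound, and the same tree-width/running-time analysis; your ``level-distance'' lemma --- propagation advances at most two levels per round, so a buffer of $2\ell-1$ levels suffices --- is exactly the paper's induction on $s$ with the shrinking level windows $L^s$, just unwound backward from the target node instead of forward from round $1$. The argument is correct (modulo a harmless off-by-one in the overlap count, $4\ell-2$ versus $4\ell-3$, which does not affect the choice of $k$).
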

\begin{proof}
Let $\Pi^*$ be an optimal solution for $\ell$-round PDS in $G$. To
prove the theorem, it is enough to prove the following two claims:
1) $\mathcal{O}_{i,j}$ is a feasible solution for the instance
$\langle G,C_{i,j}\rangle$ of the generalized $\ell$-round PDS
problem, 2) $\Pi^* \cap B_{i,j}$ is a feasible solution for $\langle
G[B_{i,j}],C_{i,j}\rangle$. First let us see how the theorem follows
from the above claims. The first claim shows that $\Pi_i$, for each
$i$, is a feasible solution for the $\ell$-round PDS problem for
$G$, since $\cup_{j\geq 0}{C_{i,j}}=V$.
The second claim shows that $\card{\mathcal{O}_{i,j}}\leq
\card{\Pi^* \cap B_{i,j}}$, so
$\card{\Pi_i}\leq\sum_{j}{\card{\Pi^*\cap B_{i,j}}}$. In the right
hand side we counted the nodes in the optimal solution twice on
$4\ell-2$ common levels between any two consecutive blocks. By
considering all values of parameter $i$, $1\leq i\leq k$, we can
find an $i$ such that the number of double counted nodes in the
optimal solution $\Pi^*$ is at most $\frac{4\ell-2}{k}\card{\Pi^*}$.
This implies that $\card{\Pi_O}\leq
(1+\frac{4\ell-2}{k})\card{\Pi^*}$, so by setting
$k=4\cdot\lceil\frac{\ell}{\epsilon}\rceil$ we get a
$(1+\epsilon)$-approximation algorithm.

Now we analyze the running time of the algorithm. In step (5) we
solve an instance of the generalized $\ell$-round PDS problem. The
graph in this instance has $k+4\ell-2$ levels, so it is a
$(k+4\ell-2)$-outerplanar graph. It is known that the tree-width of
any $d$-outerplanar graph is at most $3d-1$ \cite{Bib:BO88}.
Therefore step (5) of our algorithm can be done in
$c^{O({\ell/{\epsilon}\log{\ell}})}$ time by Corollary
\ref{Cor:LGPDSSDyn}, since the graph has tree-width at most
$3(k+4\ell)\leq 12(\lceil\frac{\ell}{\epsilon}\rceil+\ell)$. This
shows that $\ell$ should be $O(\frac{\log{n}}{\log{\log{n}}})$ in
order to have a polynomial time algorithm and in this case step (5)
can be done in $n^{O(\frac{1}{\epsilon})}$ time. Also note that the
value of $j$ can be at most $\frac{n}{k}$, since the number of
levels of a planar graph is at most $n$ (the number of nodes) and
each $C_{i,j}$ has $k$ levels. This shows that the algorithm
executes step (5) at most $k\times \tfrac{n}{k}= n$ times. Also the
steps (8) and (10) can be done in polynomial time. Therefore the
running time of the algorithm is polynomial time in the number of
nodes of $G$ for a fixed $\epsilon$.

Now we prove the two claims stated above: 1) $\mathcal{O}_{i,j}$ is
a feasible solution for $\langle G,C_{i,j}\rangle$, 2) $\Pi^* \cap
B_{i,j}$ is a feasible solution for $\langle
G[B_{i,j}],C_{i,j}\rangle$.

\noindent{\bf Proof of the first claim:} We know that
$\mathcal{O}_{i,j}$ is a feasible solution for $\langle
G[B_{i,j}],C_{i,j}\rangle$. Let $t_v$ denote the round in which $v$
was power dominated in $G[B_{i,j}]$. So any node $v\in C_{i,j}$ and
possibly some nodes $v\in B_{i,j}\setminus C_{i,j}$ satisfy $0\leq
t_v\leq \ell$. For simplicity we use $L^{s}$ to denote the levels
$L(jk+i-2(\ell-s),(j+1)k+i-1+2(\ell-s))$ for any $s\geq 1$, and also
$L^{0}=B_{i,j}$. Observe that $L^{s+1}$ (for $s\geq 1$) is obtained
from $L^{s}$ by deleting the first two levels of $L^{s}$ and the
last two levels of $L^{s}$.

First note that by taking $\mathcal{O}_{i,j}$ all nodes $v$ with
$t_v=0$ in $L^{0}$ are power dominated in the graph $G$. Now we
claim that the following statement is correct: for each $s$, $1\leq
s \leq\ell$, all nodes $v$ with $t_v\leq s$ in $L^{s}$ are power
dominated in the graph $G$. We prove the statement by induction on
$s$. The base case $s=1$ is trivial by applying the first rule of
PDS, since any node $v\in L^{1}$ with $t_v=1$ had a neighbor $u\in
L^0$ with $t_u=0$. Therefore all nodes with $t_v\leq 1$ in $L^{1}$
are power dominated. Assume that the statement is correct for all
$s< s'$. Consider a node $v$ with $t_v=s'$ which lies in $L^{s'}$,
and assume that it was power dominated by applying the propagation
rule on $u$. It is easy to see that $u$ is inside the levels
$L(jk+i-2(\ell-s')-1,(j+1)k+i-1+2(\ell-s')+1)$, and therefore all
neighbors of $u$ are inside the levels
$L(jk+i-2(\ell-s')-2,(j+1)k+i-1+2(\ell-s')+2)=L^{s'-1}$  (see Figure
\ref{Fig:Layer-Proof}). As $v$ was power dominated by $u$, any $w\in
N[u]-v$ satisfies $t_w\leq s'-1$. By the induction hypothesis any
node $w\in N[u]-v$ is already power dominated in the graph $G$. The
propagation rule can be applied to $u$ to power dominate $v$. This
completes the induction step and proves the statement. An important
property that should be preserved is that all nodes $v$ with
$t_v=s'$ should be power dominated in parallel. It is easy to see
that in the above proof any such node can be power dominated in
parallel since they were power dominated in parallel in the induced
graph $G[B_{i,j}]$.

To prove the first claim it is enough to note that the set
$L^{\ell}$ is exactly $C_{i,j}$, so all the nodes in $C_{i,j}$ with
$0\leq t_v\leq\ell$ (which is exactly all the nodes in $C_{i,j}$)
can be power dominated in the graph $G$.
\begin{figure}[htbp]
\begin{center}
\input{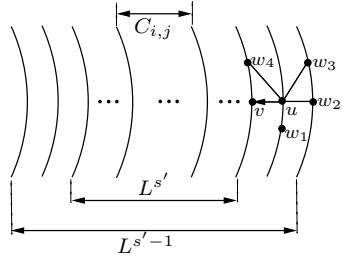}
\caption{Induction step} \label{Fig:Layer-Proof}
\end{center}
\end{figure}

\noindent{\bf Proof of the second claim:} We know that $\Pi^*$ is a
feasible solution for $G$. Let $t_v$ denote the round in which node
$v\in B_{i,j}$ was power dominated in $G$. So each node $v\in
B_{i,j}$ satisfies: $0\leq t_v\leq \ell$. Define $L^s$ as before.
The same induction hypothesis as above will prove the statement: for
each $s$, $0\leq s \leq\ell$, all nodes $v$ with $t_v\leq s$ in
$L^{s}$ can be power dominated in the induced subgraph $G[B_{i,j}]$.
The proof is similar to the first claim. Note that the set
$L^{\ell}$ is exactly $C_{i,j}$, so all the nodes in $C_{i,j}$ can
be power dominated in $G[B_{i,j}]$ in at most $\ell$ parallel
rounds.
\end{proof}

\section{Hardness of $\ell$-round PDS}
In this section we prove the following result by a reduction from
the \textsc{Minrep} problem. The reduction given here is similar to
the reduction used to prove the same hardness of approximation for
the Directed PDS problem \cite{Bib:AS06}.
\begin{theorem}\label{Thm:HardlDPDS}
The $\ell$-round PDS problem for any $\ell\geq 4$ cannot be
approximated within $2^{\log^{1-\epsilon}{n}}$ ratio, for any fixed
$\epsilon > 0$, unless $\np\subseteq \dtime(n^{polylog(n)})$.
\end{theorem}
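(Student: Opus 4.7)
The plan is to reduce from the \textsc{MinRep} problem, which is known to be hard to approximate within $2^{\log^{1-\epsilon}{N}}$ unless $\np\subseteq\dtime(N^{polylog(N)})$. Given a \textsc{MinRep} instance with groups $A_1,\ldots,A_q\subseteq A$ and $B_1,\ldots,B_q\subseteq B$ together with a relation $E\subseteq A\times B$ (inducing superedges between groups), I would build a graph $G$ whose minimum $\ell$-round PDS closely tracks the optimal \textsc{MinRep} value. The point of the $\ell=4$ target is to keep the construction ``shallow'': at most four parallel rounds of propagation should suffice to reach every vertex, but only if the chosen seeds correspond to a valid \textsc{MinRep} representative set.

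The construction I have in mind puts a selector vertex $\alpha_a$ for each $a\in A$ and $\beta_b$ for each $b\in B$. For each group $A_i$ I attach a ``group gadget'' consisting of a hub with pendants whose power domination in a small number of rounds essentially forces $S$ to contain at least one $\alpha_a$ with $a\in A_i$ (and symmetrically for $B_j$). For each superedge $(A_i,B_j)$ I introduce an ``edge-test'' gadget, a small constant-size structure with one critical terminal; the gadget is wired so that the terminal can be power dominated within the remaining rounds only if $S$ simultaneously contains $\alpha_a$ and $\beta_b$ for some edge $(a,b)\in E$ with $a\in A_i$, $b\in B_j$. The key design constraint is the round budget: Round~1 applies (R1) at the seeds, Round~2 pushes into selector/group gadgets to broadcast which representative was chosen, Round~3 propagates the ``signal'' into the edge-test gadgets, and Round~4 finishes off the terminals. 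To prevent cheap shortcuts I would add pendants of sufficiently high multiplicity at each intermediate vertex, exploiting the fact that (R2) requires \emph{all but one} neighbor to already be power dominated; this ensures that propagation cannot enter a gadget through an unintended direction unless the full dominating structure of a \textsc{MinRep} representative set is present.

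For correctness, I would prove two matching directions. Forward: from any \textsc{MinRep} solution $(A',B')$ of size $k$, form $S = \{\alpha_a:a\in A'\}\cup\{\beta_b:b\in B'\}\cup S_0$, where $S_0$ is a small fixed ``seed set'' built into the gadgets, and verify by direct case analysis on the gadgets that $\npd{4}{S}=V$. Converse: given any feasible $4$-round PDS $S$ of size $s$, use the gadget structure to extract a \textsc{MinRep} solution of size $O(s)$, arguing that each group gadget forces $S$ to include (or be charged for) at least one selector vertex from the corresponding group, and that unmet superedges would leave some terminal with distance $>4$ from $S$ in the propagation DAG. Combining both directions yields $\Optl{4}{G}=\Theta(\Opt_{\textsc{MinRep}})+O(n_0)$ where $n_0$ depends only on the gadget sizes. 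Since $|V(G)|$ is polynomial in the \textsc{MinRep} instance size, the $2^{\log^{1-\epsilon}{N}}$ inapproximability of \textsc{MinRep} transfers, up to a relabeling of $\epsilon$, to $\ell$-round PDS; Proposition~\ref{Pro:Monoton} then lifts the bound to all $\ell\geq 4$.

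The main obstacle is the tension between the round bound and expressiveness of the reduction. Classical PDS hardness reductions chain propagation through long gadgets, which is unavailable here. I expect most of the work to go into engineering the group and edge-test gadgets so that (i) they have diameter at most a constant $c$ from $S$, (ii) every vertex outside $S$ acquires exactly one undominated neighbor at precisely the right round, and (iii) any attempt to cover a gadget with a non-representative seed is strictly more expensive than placing a correct representative. Verifying that the gadgets interact correctly in \emph{parallel} rounds, as opposed to sequential PDS propagation, is the subtle point; the intended approach is to make each gadget self-contained in its propagation pattern so parallelism at the global level decomposes into independent local propagations.
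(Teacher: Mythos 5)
Your high-level plan (reduce from \textsc{MinRep}, use constant-depth gadgets so that four parallel rounds suffice exactly when the seeds form a representative set) matches the paper's strategy, but three concrete ingredients are missing or wrong, and each one breaks a direction of the argument. First, you have no mechanism for power dominating the non-selected vertices of $A\cup B$ and the interiors of the $\Theta(|\mathcal{E}|)$ edge-test gadgets within four rounds at $O(1)$ extra cost. Your ``small fixed seed set $S_0$ built into the gadgets'' is either per-gadget (in which case the additive term is $\Theta(|\mathcal{E}|)$, which can dwarf $\Opt_{\textsc{MinRep}}$ and destroys gap preservation) or it is genuinely $O(1)$ (in which case most gadgets are unreachable in four rounds and the forward direction fails). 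The paper resolves this with a single master node $w^*$ adjacent to every original vertex and to designated interior vertices of every gadget, forced into any optimal solution by three pendant neighbors $w^*_1,w^*_2,w^*_3$; this makes $\Optl{\ell}{\overline{G}}=\Opt_{\textsc{MinRep}}+1$ exactly. Second, your direction-control device --- ``pendants of sufficiently high multiplicity at each intermediate vertex'' --- backfires: a vertex with two or more pendant neighbors can never have them dominated except by rule (R1), so you are forcing every such intermediate vertex into $S$, inflating the solution by the number of gadgets. The paper instead uses a small one-way gadget (the nodes $\alpha,\beta,\gamma$ between each $u_q,v_q$ and the center of $D_{ij}$) that lets propagation flow outward from the center but never inward, at no cost in seeds.

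Third, the lift to all $\ell\geq 4$ via Proposition~\ref{Pro:Monoton} goes the wrong way. Monotonicity gives $\Optl{\ell}{G}\leq\Optl{4}{G}$ for $\ell\geq 4$, so hardness of the $4$-round problem on your instances does not transfer to larger $\ell$; your lower-bound argument (``unmet superedges leave a terminal at distance $>4$ from $S$'') is intrinsically a $4$-round argument and evaporates once more rounds are allowed. What is actually needed, and what the paper proves, is a lower bound against the \emph{unbounded-round} PDS problem on the constructed instance: one shows that any optimal PDS solution can be assumed to live in $A\cup B\cup\{w^*\}$ (using $\lambda=4$ disjoint copies of each $D_{ij}$ sharing only $A_i\cup B_j$, so that seeding gadget interiors is never profitable) and must therefore cover every superedge. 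By Proposition~\ref{Pro:Monoton} that single lower bound then applies to every $\ell$, while the $4$-round upper bound applies to every $\ell\geq4$. Without gadgets that resist propagation cheating under arbitrarily many rounds, your reduction only establishes hardness for $\ell=4$ exactly, not for all $\ell\geq4$ as the theorem claims.
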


In the \textsc{MinRep} problem we are given a bipartite graph
$G=(A,B,E)$ with a partitioning of $A$ and $B$ into equal size
subsets, say $A=\bigcup_{i=1}^{q_A}A_i$ and
$B=\bigcup_{i=1}^{q_B}B_i$, where
$\card{A_i}=m_A=\frac{\card{A}}{q_A}$ and
$\card{B_i}=m_B=\frac{\card{B}}{q_B}$. This partitioning naturally
defines a super bipartite graph
${\cal{H}}=({\cal{A}},{\cal{B}},{\cal{E}})$. The super nodes of
${\cal H}$ are ${\cal{A}}=\{A_1,A_2,\cdots,A_{q_A}\}$ and
${\cal{B}}=\{B_1,B_2,\cdots,B_{q_B}\}$, and the super edges are
${\cal E}=\{A_iB_j| \exists a\in A_i, b\in B_j: \set{a,b}\in
E(G)\}$. We say that a super edge $A_iB_j$ is covered by
$\set{a,b}\in E(G)$ if $a\in A_i$ and $b\in B_j$. The goal in
\textsc{MinRep} is to pick the minimum number of nodes $A'\cup
B'\subseteq V(G)$ from $G$ to cover all the super edges in
${\cal{H}}$. The following theorem states the hardness of the
\textsc{MinRep} problem \cite{Bib:KKL04}.
\begin{theorem} \cite{Bib:KKL04}
The \textsc{MinRep} problem cannot be approximated within the ratio
$2^{{log}^{1-\epsilon}{n}}$, for any fixed $\epsilon >0$, unless
$\np\subseteq \dtime(n^{polylog(n)})$, where $n=\card{V(G)}$.
\end{theorem}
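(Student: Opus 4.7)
The plan is to derive this hardness from Raz's parallel repetition theorem applied to the two-prover one-round game obtained from the PCP theorem, using the standard equivalence between MinRep and (min-cost) Label Cover. First I would invoke the PCP theorem to obtain, from any language $L\in \np$ and an input of size $N$, a two-prover one-round game $\mathcal{G}$ (equivalently, a Label Cover instance) of size polynomial in $N$, with perfect completeness and some absolute-constant soundness $s<1$. The two sides of the bipartite ``question'' graph will eventually become $\mathcal{A}$ and $\mathcal{B}$ in the super-graph $\mathcal{H}$, and the provers' possible answers will become the concrete nodes in each $A_i$ and $B_j$.

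Next, I would boost the gap by applying Raz's parallel repetition theorem $k$ times to obtain the game $\mathcal{G}^{\otimes k}$, which has perfect completeness and soundness at most $s^{\Omega(k)}=2^{-\Omega(k)}$. The instance size grows to $N^{O(k)}$: the question sets are $k$-tuples and the answer sets are $k$-tuples, each of size polynomial in the original. Then I would translate $\mathcal{G}^{\otimes k}$ into a MinRep instance in the standard way: introduce a concrete node for each (question, answer) pair on each side, group the concrete nodes sharing the same question into a super-node (this yields the partitions $A=\bigcup A_i$, $B=\bigcup B_j$), and place a concrete edge between two nodes exactly when their labels constitute an accepting answer pair for the corresponding super-edge (the question pair). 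This makes the super-edges of $\mathcal{H}$ correspond to the positive-probability question pairs of $\mathcal{G}^{\otimes k}$.

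The core combinatorial content I would then verify is the following two-way bound relating the MinRep optimum to the game value, essentially already inside \cite{Bib:KKL04}: (i) in the completeness case (original $L$-instance is in $L$), the provers have a deterministic winning strategy, which yields a MinRep solution of size $q_A+q_B$ by picking the single concrete node encoding each question's answer; (ii) in the soundness case, any MinRep solution of size at most $T(q_A+q_B)$ induces a randomized prover strategy (pick a uniformly random chosen node inside each super-node) succeeding with probability $\Omega(1/T^2)$, so $T^2 \gtrsim 2^{\Omega(k)}$, i.e.\ $T \geq 2^{\Omega(k)}$. Thus the MinRep optimum gap is $2^{\Omega(k)}$.

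Finally, I would choose $k=\log^{c}N$ for a suitable constant $c=c(\epsilon)$. Writing $n$ for the size of the resulting MinRep instance we have $n\le N^{O(k)}=N^{\operatorname{polylog}(N)}$, so $\log n=\operatorname{polylog}(N)$ and the gap $2^{\Omega(k)}$ can be written as $2^{\log^{1-\epsilon}n}$ by taking $c$ large enough relative to $\epsilon$. The reduction runs in time polynomial in $n$, which is $N^{\operatorname{polylog}(N)}$, i.e.\ quasi-polynomial in $N$, so an approximation ratio better than $2^{\log^{1-\epsilon}n}$ for MinRep would place $\np$ in $\dtime(n^{\operatorname{polylog}(n)})$. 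The main obstacle, and the one where I expect to have to be careful, is the quantitative bookkeeping in step (ii): tightly relating the MinRep optimum to the parallel-repeated game value (the quadratic loss through $T^2$ is what forces the choice of $k$ and constants) and ensuring that after plugging $k=\log^{c}N$ the exponent still simplifies to the clean form $2^{\log^{1-\epsilon}n}$ for every fixed $\epsilon>0$.
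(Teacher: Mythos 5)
The paper does not prove this theorem at all: it is imported verbatim as a known result of Kortsarz, Krauthgamer and Lee \cite{Bib:KKL04}, and is used only as a black box in the reduction to $\ell$-round PDS. Your outline (PCP theorem, Raz parallel repetition with $k=\log^{c}N$ rounds, the standard translation of the repeated two-prover game into a \textsc{MinRep}/Label-Cover instance, and the completeness/soundness bounds with the quadratic loss $T^{2}\gtrsim 2^{\Omega(k)}$) is a correct reconstruction of the standard argument underlying that citation, and the parameter bookkeeping you describe does close as you claim.
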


\noindent{\bf The reduction:} Theorem \ref{Thm:HardlDPDS} is proved
by a reduction from the \textsc{MinRep} problem. In the following we
create an instance $\overline{G}=(\overline{V},\overline{E})$ of
$\ell$-round PDS from a given instance $G=(A,B,E)
({\cal{H}}=({\cal{A}},{\cal{B}},{\cal{E}})$)  of the \textsc{MinRep}
problem.
\begin{enumerate}
\item Add a new node $w^*$ (master node) to the graph $G$, and add
an edge between $w^*$ and all the other nodes in $G$. Also add three
new nodes $w^*_1, w^*_2, w^*_3$ and connect them to $w^*$.
\item $\forall i\in\set{1,\ldots,q_A}, j\in\set{1,\ldots,q_B}$ do
the following:
\begin{enumerate}
\item Let $E_{ij}=\set{e_1,e_2,\ldots,e_{\kappa}}$ be the set of edges between $A_i=\{a_{i_1}, \ldots, a_{i_{m_A}}\}$ and
$B_j=\{b_{j_1}, \ldots, b_{j_{m_B}}\}$ in $G$, where $\kappa$ is the
number of edges between $A_i$ and $B_j$.
\item Remove $E_{ij}$ from $G$.
\item Let the edge $e_q\in E_{i,j}$ be incident to $a_{i_q}$ and $b_{j_q}$ (in $G$). In this labeling for simplicity the same node might get different
labels. Let $D_{ij}$ be the graph in Figure \ref{Fig:HardlDPDS} (a
dashed line shows an edge between a node and the master node $w^*$).
\begin{figure}[!t]
\begin{center}
\input{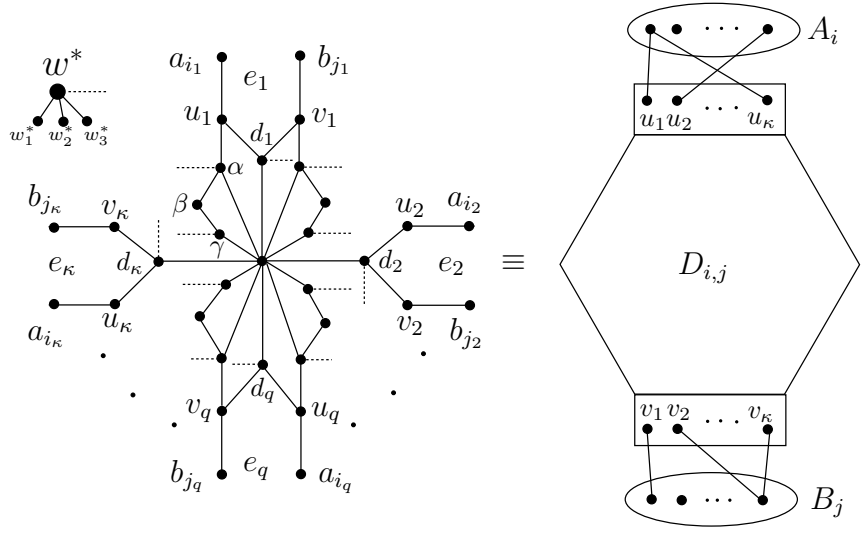}
\caption{The graph $D_{ij}$} \label{Fig:HardlDPDS}
\end{center}
\end{figure}
 Make $\lambda=4$ new copies of the graph $D_{ij}$ and then
identify nodes $a_{i_q}$'s, $b_{j_q}$'s with the corresponding nodes
in $A_i$ and $B_j$ (in $G$). Note that the $\lambda$ copies are
sharing the same set of nodes, $A_i$ and $B_j$, but other nodes are
disjoint.
\end{enumerate}
\item Let $\overline{G}=(\overline{V},\overline{E})$ be the obtained
graph (See Figure \ref{Fig:HardConst} for an illustration).
\end{enumerate}
\begin{figure}[!hb]
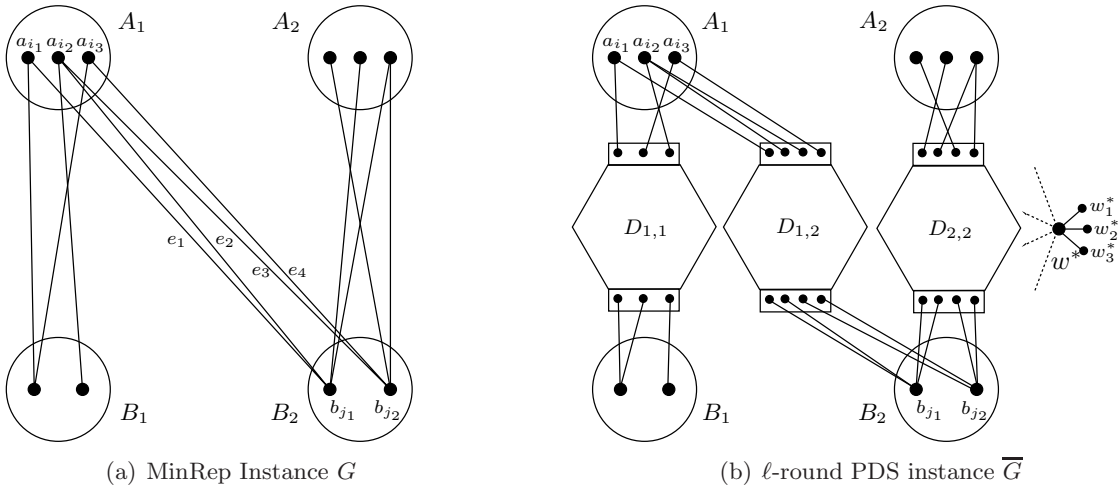

\begin{center}
\subfigure[MinRep Instance $G$] {
    \input{HardConsta.pstex_t}
} \hspace{1cm} \subfigure[$\ell$-round PDS instance $\overline{G}$]
{
    \input{HardConstb.pstex_t}
} \caption{The hardness construction} \label{Fig:HardConst}
\end{center}
\end{figure}
\noindent{\bf The analysis:} The next lemma shows that the size of
an optimal solution in $\ell$-round PDS is exactly one more than the
size of an optimal solution in the \textsc{MinRep} instance. The
number of nodes in the constructed graph is at most
$\card{V(\overline{G})}\leq 4+\card{V(G)}+10\lambda\card{E(G)}$.
This shows that the above reduction is a \emph{gap preserving
reduction} from \textsc{MinRep} to $\ell$-round PDS with the same
gap (hardness ratio) as the \textsc{MinRep} problem. Therefore the
following lemma will complete the proof of the above theorem. As we
mentioned above, the reduction given here is similar to the one used
for proving the hardness of the directed PDS problem
\cite{Bib:AS06}. One important part of the above construction (see
Figure \ref{Fig:HardlDPDS}) is the gadget on the set of nodes
$\set{\alpha, \beta, \gamma}$. Note that there should be such a
gadget between the center node in $D_{i,j}$ and each $u_q$ and
$v_q$; in Figure \ref{Fig:HardlDPDS} not all of the gadgets are
shown (for example between $u_2$ and the center node). This gadget
introduces direction into undirected construction, it allows the
propagation in only one direction. After the center node of
$D_{i,j}$ is power dominated, all of the other nodes in $D_{i,j}$
get power dominated (by the propagation rule). On the other hand,
the power domination cannot propagate through the gadget in the
other direction (toward the center node).
\begin{lemma}
The pair $(A^*,B^*)$ is an optimal solution to the instance
$G=(A,B,E)$ of the \textsc{MinRep} problem if and only if
$\Pi^*=A^*\cup B^*\cup\set{w^*} \subseteq V(\overline{G})$ is an
optimal solution to the instance $\overline{G}$ of $\ell$-round PDS
(for all $\ell\geq 4$).
\end{lemma}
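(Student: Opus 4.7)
The plan is to prove both implications through an explicit size-preserving correspondence: MinRep solutions $(A^*,B^*)$ correspond to $\ell$-round PDS solutions $A^*\cup B^*\cup\set{w^*}$, so the two optima differ by exactly one. The first ingredient I would establish is that every optimal $\ell$-round PDS solution $\Pi^*$ for $\overline{G}$ contains $w^*$: the three pendants $w^*_1,w^*_2,w^*_3$ have $w^*$ as their only neighbor, so (R2) is inapplicable to them and each must be covered either by itself or by $w^*$; if $w^*\notin\Pi^*$ then all three pendants lie in $\Pi^*$, and exchanging them for $\set{w^*}$ yields a strictly smaller feasible solution. Hence $w^*\in\Pi^*$ in any optimum, reducing all further analysis to $\Pi^*\setminus\set{w^*}$.

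For the forward direction, given a feasible MinRep solution $(A^*,B^*)$, set $\Pi:=A^*\cup B^*\cup\set{w^*}$. In round~$1$, rule (R1) applied to $w^*$ dominates $w^*$, the three pendants, and every vertex of $A\cup B$, since $w^*$ is joined to all original vertices by Step~$1$ of the construction. What remains is the interior of each gadget $D_{ij}$. For any super-edge $A_iB_j$, by feasibility of $(A^*,B^*)$ there is an edge $e_q=\set{a_{i_q},b_{j_q}}\in E_{ij}$ with both endpoints in $A^*\cup B^*$; I would then trace, by walking along the gadget of Figure~\ref{Fig:HardlDPDS}, that $\set{a_{i_q},b_{j_q}}\cup\set{w^*}$ together trigger rule (R2) enough times to power dominate the entire interior of $D_{ij}$, and hence of its four copies (which share only the external vertices of $A_i\cup B_j$), within $\ell-1\geq 3$ further rounds. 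This is where the depth of the $\set{\alpha,\beta,\gamma}$ sub-gadget matters and where the bound $\ell\geq 4$ becomes tight.

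For the reverse direction, let $\Pi^*$ be optimal with $w^*\in\Pi^*$ and set $A^*:=\Pi^*\cap A$, $B^*:=\Pi^*\cap B$. I would argue that $(A^*,B^*)$ covers every super-edge of $\mathcal{H}$. Suppose for contradiction that some super-edge $A_iB_j$ is uncovered. The key property of the gadget, imported from \cite{Bib:AS06}, is that the direction-enforcing $\set{\alpha,\beta,\gamma}$ sub-gadgets block propagation from the external vertices $A_i\cup B_j$ back toward the centre of $D_{ij}$; consequently, the interior of $D_{ij}$ can be dominated purely from outside only if both endpoints of some $e_q\in E_{ij}$ are already power dominated by round~$1$, i.e.\ only if $A_iB_j$ is covered by $(A^*,B^*)$. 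If not, then at least one interior vertex must be selected in each of the $\lambda=4$ vertex-disjoint copies of $D_{ij}$, contributing $\geq 4$ to $|\Pi^*|$. Swapping these four interior picks for the two external picks $a_{i_q},b_{j_q}$ from an arbitrary $e_q\in E_{ij}$ covers $A_iB_j$ and, by the forward direction, yields a feasible $\ell$-round PDS solution of size $|\Pi^*|-4+2<|\Pi^*|$, contradicting optimality. Hence $(A^*,B^*)$ is a feasible---and necessarily optimal by a symmetric counting argument---MinRep cover of size $|\Pi^*|-1$.

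Combining the two directions establishes the claimed bijection between optima and proves the lemma. The main obstacle I expect is the detailed round-by-round propagation bookkeeping inside a single copy of $D_{ij}$: one must verify simultaneously that $4$ parallel rounds suffice in the forward direction whenever a covering edge exists, and that an uncovered super-edge really forces $\lambda=4>2$ extra interior picks in the reverse direction. Both rest on the precise geometry of the $\set{\alpha,\beta,\gamma}$ direction-enforcing sub-gadget and the placement of the centre of $D_{ij}$, and are exactly where the directed-PDS construction and analysis of \cite{Bib:AS06} need to be imported and adapted to the ``at most $\ell$ parallel rounds'' regime.
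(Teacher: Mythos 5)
Your proposal follows essentially the same route as the paper's proof: force $w^*$ into any optimum by an exchange with the pendant nodes, establish feasibility of $A^*\cup B^*\cup\set{w^*}$ by tracing the propagation $a_{i_q},b_{j_q}\rightarrow u_q,v_q\rightarrow$ center $\rightarrow$ rest of $D_{ij}$ in four parallel rounds, and derive the lower bound by swapping interior gadget picks for the two endpoints of an edge covering the super edge. The only differences are two harmless overcounts: the paper observes that if $w^*\notin\Pi^*$ only \emph{two} of the three pendants need be selected (rule (R2) applied at $w^*$ can save one), and that only \emph{three} of the four copies of $D_{ij}$ need an interior node (the fourth may be saved by propagation); in both cases the exchange still strictly decreases the solution size, so your argument goes through unchanged.
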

\begin{proof}
The node $w^*$ should be in any optimal solution in order to power
dominate $w^*_1, w^*_2, w^*_3$, since otherwise we need to have at
least $2$ nodes from the set $\set{w^*_1, w^*_2, w^*_3}$ to get a
feasible solution. By picking $w^*$, all the nodes in $A\cup B$ (and
also the nodes inside $D_{ij}$'s that are the neighbors of $w^*$)
will be power dominated.

Assume that $A^*\cup B^*$ is an optimal solution for the
\textsc{MinRep} instance $G$. Now let us show that $\Pi=A^*\cup
B^*\cup\set{w^*}$ is a feasible solution to the PDS instance
$\overline{G}$. As described above, the nodes in $A\cup B$ and some
nodes inside $D_{ij}$'s are power dominated by $w^*$. Consider a
super edge $A_iB_j$ in ${\cal H}$. Note that the set $A^*\cup B^*$
covers all the super edges in ${\cal{H}}$. So there exists an edge
$e_q=\set{a_{i_q},b_{j_q}}\in E(G)$ such that $a_{i_q}\in A^*$ and
$b_{j_q}\in B^*$, which covers the super edge $A_iB_j$. Since
$a_{i_q}$ and $b_{j_q}$ are in the set $\Pi$ they will power
dominate their neighbors, $u_q$ and $v_q$, in all of the $4$ copies
of $D_{ij}$ in $\overline{G}$. After $u_q$ and $v_q$ are power
dominated, the node $d_q$ will power dominate the center node in
$D_{ij}$. It is easy to check that after the center node is power
dominated, all of the nodes in $D_{ij}$ will be power dominated
(through the gadgets on the nodes $\alpha, \beta, \gamma$). This
shows that $\Pi$  is a feasible solution for PDS in $\overline{G}$.
Also it is straightforward to check that such a solution will power
dominate the entire graph in at most $4$ parallel rounds. Therefore,
$\Optl{\ell}{\overline{G}}\leq \card{A^*\cup B^*}+1.$

By Proposition \ref{Pro:Monoton} the size of an optimal solution for
PDS is a lower bound on the size of an optimal solution for
$\ell$-round PDS.  So it is enough to prove that the above upper
bound is also a lower bound for PDS in $\overline{G}$. Let
$\Pi^*\subseteq V(\overline{G})$ be an optimal solution for PDS. As
we saw above, $w^*$ should be in any optimal solution for PDS. Now
define $A'=A\cap \Pi^*$ and $B'=B\cap \Pi^*$. First we prove that
any optimal solution of PDS only contains nodes from $A\cup
B\cup\set{w^*}$, and then we show that $(A',B')$ covers all the
super edges. Suppose for the contradiction that $\Pi^*$ contains
some nodes which are not in $A\cup B\cup \set{w^*}$. So there are
some $D_{ij}$'s  that cannot be power dominated completely by
$\Pi^*\cap(A\cup B\cup \set{w^*})$. By symmetry all the $4$ copies
of $D_{ij}$ are not completely power dominated.  So the optimal
solution $\Pi^*$ needs to have at least one node from at least $3$
of the $4$ copies, and the remaining one might be power dominated by
applying the propagation rule. By removing these $3$ nodes from
$\Pi^*$ and adding $a_{i_q}\in A_i$ and $b_{j_q}\in B_j$ to $\Pi^*$
for some arbitrary edge $e_q=\{a_{i_q},b_{j_q}\}\in E(G)$ we can
power dominate all of the $4$ copies of $D_{ij}$. This is a
contradiction with the optimality of $\Pi^*$. This proves that any
optimal solution will consist of nodes only from $A\cup
B\cup\set{w^*}$. To show that $(A',B')$ covers all the super edges,
it is enough to note the following: suppose no node from the inside
of any copies of $D_{ij}$ is in the optimal solution; then any
$D_{ij}$ can be power dominated only by taking both end points of an
edge between the corresponding partitions $(A_i,B_j)$. This shows
that the size of an optimal solution for PDS on $\overline{G}$ is at
least the size of an optimal solution for the \textsc{MinRep}
problem on $G$ plus $1$. This completes the proof of the lemma.
\end{proof}

\section{Integer programming formulations for $\ell$-round PDS}
In this section we present an integer programming (IP) formulation
for the $\ell$-round PDS problem, and then we present a related
integer programming formulation for the original PDS problem.
Finally, we consider LP relaxations of these two IPs, and we show
that they both have integrality gap of $\Omega(n)$.
\subsection{An integer programming for $\ell$-round PDS}
Here we consider an {IP} formulation for the $\ell$-round PDS
problem. Given an undirected graph $G=(V,E)$ and a parameter
$1\leq\ell\leq n$, where $n=\card{V}$, define the set of parallel
rounds $\T=\set{1,\ldots,\ell}$. The variables in the IP formulation
are as follows. Let $S^*$ be the optimal solution. We have a binary
variable $x_v$ for each node $v$, that is equal to $1$ if and only
if the node $v$ is in $S^*$ ($S^*=\set{v\in V: x_v=1}$). For each
node $v$ and a parallel round $t\in\T$ we have a binary variable
$z^{t}_{v}$ , where $z^t_v=1$ means that the node $v$ is power
dominated on (or before) the parallel round $t$. For each edge
$\set{u,v}\in E$ and a parallel round $t\in\T$ we have binary
variables $Y_{u \rightarrow v}^{t}$  and  $Y_{v \rightarrow u}^{t}$,
where $Y_{u \rightarrow v}^{t}=1$ means that $u$ can power dominate
$v$ at the parallel round $t+1$. Before stating the IP formulation
formally, we describe the constraints informally. We have a set of
constraints for the termination condition saying that every node
should be power dominated at the end of the last round (see $(1)$).
The second set of constraints are for power dominating all the nodes
in the closed neighborhood of a node that is in the optimal solution
(see (2)). We have another set of constraints for each edge
$\set{u,v}\in E$ that checks if the propagation rule (R2) can be
applied to $u$ in order to power dominate $v$ in the next round;
node $u$ is ready to power dominate $v$ only if $u$ and all of its
neighbors except $v$  are already power dominated (see (3)). The
last set of constraints are for checking if node $v$ is power
dominated at time $t$; node $v$ is power dominated only if either it
is in the optimal solution or at least one of its neighbors can
power dominate $v$ at time $t-1$ (see (4)). Once $v$ is power
dominated, it should remain power dominated. The term $+x_v$ in the
right hand side of constraint (4) is needed to ensure that the $z$
variables are monotone, that is, $z_v^{t+1}\geq z_v^t$.

\begin{align*}
(\texttt{IP}_{\ell})\quad &\min \sum_{v}{x_v} \quad \\
\mbox{s.t.}\quad&\\
(1)\quad & 1 \leq z^{\ell}_{v} & \forall v\in V\\
(2)\quad & z^{1}_{v} \leq  \sum_{u\in N[v]}{x_u}& \forall v\in V\\
(3)\quad & Y_{u \rightarrow v}^{t} \leq z^{t}_{w}& \forall (u,v):
\set{u,v}\in E, \forall w\in N[u]\setminus \set{v}, \forall t\in\T\\
(4)\quad & z^{t}_{v} \leq  \sum_{u\in N(v)}{Y_{u\rightarrow v}^{t-1}+x_v}& \forall v\in V, \forall t\in\T\setminus\set{1}\\
(5)\quad &\mbox{all variables are binary}\\
\end{align*}

To get an LP relaxation, we relax the variables to be non-negative
instead of being binary. It turns out that this LP relaxation is
very weak. We can add the following valid constraints that make the
LP stronger\footnote{These constraints increase the optimum value of
the LP on the cycle with $9$ nodes, $C_9$, for $\ell=3$ from $0.6$
to $1.$}. The first one forces the number of power dominated nodes
at the first round to be at least equal to the size of the smallest
closed neighborhood, and the second one forces the number of power
dominated nodes to increase by at least one at each round.
\begin{align*}
(6)\quad& \delta(G)+1\leq \sum_{v}{z^1_{v}}\\
(7)\quad& \sum_{v}{z^{t-1}_{v}}+1\leq \sum_{v}{z^{t}_v}& \hspace{3cm}\forall t\in\T\setminus\set{1}\\
\end{align*}
We now prove that the new LP relaxation has integrality gap of
$\Omega(n)$.

\begin{theorem}
Let $\ell$ be a given parameter, where $\ell=\Omega(\log{n})$. Then
the \emph{LP} relaxation for the $\ell$-round PDS problem has an
integrality gap of $\Omega(n)$ even on planar graphs.
\end{theorem}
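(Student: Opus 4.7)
The plan is to exhibit a planar graph on which the integer $\ell$-round PDS optimum is $\Omega(n)$ while the LP has a feasible solution of value $O(1)$, whenever $\ell=\Omega(\log n)$. Let $G_L$ be the caterpillar on $n=2L$ nodes: a path $v_1,\ldots,v_L$ with one pendant leaf $\ell_i$ attached at each $v_i$. Then $G_L$ is a tree (hence planar), with $\delta(G_L)=1$ and $\Delta(G_L)=3$. I first claim that $\Optl{\ell}{G_L}\ge \lfloor L/3\rfloor$ for every $\ell\ge 1$; by Proposition~\ref{Pro:Monoton} this reduces to the same bound for ordinary PDS. Partitioning the path into disjoint triples $\{3k+1,3k+2,3k+3\}$, the structural claim is that each triple must contain at least one $S$-node (a path vertex or a leaf of the triple). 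Indeed, if a triple has no $S$-node, then its middle vertex $v_{3k+2}$ could only be put into $\pd{S}$ by propagation from $v_{3k+1}$, $v_{3k+3}$, or $\ell_{3k+2}$, and in each case a short chicken-and-egg unfolding shows the required prerequisites (for instance $\ell_{3k+1}\in\pd{S}$) themselves demand that $v_{3k+2}$ already be in $\pd{S}$ at a strictly earlier round, a contradiction.

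For the LP, put $x_v=\epsilon$ at every vertex and saturate the upper-bound constraints (2), (3), (4) so that $z$ and $Y$ take their largest feasible values. By translational symmetry on the interior of the path, write $a_t=z^t_{\ell_i}/\epsilon$ and $b_t=z^t_{v_i}/\epsilon$; constraint (2) gives $a_1=2$, $b_1=4$. Since $N[\ell_i]\setminus\{v_i\}=\{\ell_i\}$, constraint (3) forces $Y^t_{\ell_i\to v_i}\le a_t\epsilon$; since $\ell_i\in N[v_i]\setminus\{v_{i+1}\}$, the same constraint forces $Y^t_{v_i\to v_{i+1}}\le a_t\epsilon$; while $Y^t_{v_i\to \ell_i}\le b_t\epsilon$. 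Substituting into constraint (4) yields
\begin{equation*}
  a_{t+1}=b_t+1, \qquad b_{t+1}=3a_t+1,
\end{equation*}
so $b_{t+2}=3b_t+4$ and hence $a_t,b_t=\Theta(3^{t/2})$.

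Now pick $\epsilon=\max\bigl(1/a_\ell,\;2/(6L-2)\bigr)$. Once $\ell\ge 2\log_3 L=\Omega(\log n)$ we have $a_\ell\ge L$, so $\epsilon=\Theta(1/L)$ and $\sum_v x_v=2L\epsilon=O(1)$. Constraint (1) holds since $a_\ell\epsilon,b_\ell\epsilon\ge 1$; constraint (6) is $(6L-2)\epsilon\ge 2$, which is built into the choice of $\epsilon$; and constraint (7) reduces to $L(2a_{t-1}+2)\epsilon\ge 1$, which is $\Theta(1)$. The integrality gap is therefore at least $\lfloor L/3\rfloor/O(1)=\Omega(n)$, on a planar graph. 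The only delicate step is the recurrence: one must verify that the pendant $\ell_i$ uniformly realizes the minimum in constraint (3) for $Y^t_{v_i\to v_{i+1}}$, i.e.\ $a_t\le b_t$ for every $t$---an easy induction from $a_{t+1}=b_t+1<3a_t+1=b_{t+1}$ together with the base case $a_1=2<4=b_1$. Boundary effects at $v_1$ and $v_L$ shift $a_t,b_t$ by at most an additive constant and therefore affect neither the $\Theta(3^{t/2})$ growth nor the $\Omega(n)$ integrality gap.
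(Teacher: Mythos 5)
Your proof is correct and takes essentially the same route as the paper's: the paper uses the cycle $C_m$ with one pendant leaf per node instead of a path, puts LP mass $\alpha+\beta=\tfrac{1}{m}$ on each backbone--leaf pair, and derives the identical recurrences $z^{t+1}_u=\alpha+3z^t_v$, $z^{t+1}_v=\beta+z^t_u$ with $\Theta(3^{t/2})$ growth, while the $\Omega(n)$ integral lower bound comes from the same pendant-leaf obstruction. The one caveat is your closing remark: the boundary vertices do not merely shift $a_t,b_t$ by an additive constant but obey a genuinely weaker recurrence (the worst-case minimum over all vertices satisfies roughly $A_{t+2}\geq 2A_t+2$ rather than $3A_t$), which still grows exponentially and so still yields $z^{\ell}_v\geq 1$ for $\ell=\Omega(\log n)$ with $\epsilon=\Theta(1/L)$ --- an issue the paper sidesteps entirely by using a cycle, whose symmetry makes all backbone vertices identical.
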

\begin{proof}
Consider the graph $G$ that is obtained from the cycle on $m$ nodes,
$C_m$, by creating a new node $v'$  for each node $v$ and connecting
it by an edge to the original node (adding the edge $\set{v,v'}$ to
$G$) (see Figure \ref{Fig:IG-lPDS}). The graph $G$ has $n=2m$ nodes
and $2m$ edges. It follows from the proof of Theorem
\ref{Thm:NP-lPDS} that the size of a minimum power dominating set in
$G$ is at least $\lceil\frac{m}{3}\rceil$ and the optimal solution
power dominates the graph $G$ in two parallel rounds. Therefore the
size of any optimal solution for $\ell$-round PDS is at least
$\ceil{\frac{n}{6}}$ for any $\ell\geq 2$. Now we show that the LP
relaxation has  the optimum value of $O(1)$, and this completes the
proof. Let $U$ denote the set of nodes in $C_m$ in the graph $G$,
and $V$ denote the set of nodes of degree $1$ in $G$ (the newly
introduced nodes). We assign value $\alpha\geq0$ to all of the
variables corresponding to the nodes in $U$, and value $\beta\geq0$
to all of the other variables; $\forall u\in U: x_u=\alpha$ and
$\forall v\in V: x_v=\beta$. Before giving a feasible solution with
the objective value of $O(1)$ (for the LP based on $\alpha, \beta$),
we compute the value of the variables based on $\alpha$ and $\beta$
for the first few rounds.

If we apply the set of constraints (2) to all the nodes we get:
$\forall u\in U: z^1_u\leq 3\alpha+\beta$ and $\forall v\in V:
z^1_v\leq\alpha+\beta$. Let $u, v, w$ be the nodes of $G$ as shown
in Figure \ref{Fig:IG-lPDS}. By symmetry, we get 3 different
\emph{types} of $Y$ variables: $\Y{u}{v}{t}$, $\Y{v}{u}{t}$ and
$\Y{w}{u}{t}$. Again the symmetry of the graph and the constraints
imply that the value of $Y$ only depends on its type. It is easy to
check that the set of constraints (3) give the following
inequalities: $\Y{u}{v}{1}\leq 3\alpha+\beta (=z^1_u)$,
$\Y{v}{u}{1}\leq \alpha+\beta (=z^1_v)$ and
$\Y{w}{u}{1}\leq\alpha+\beta (=z^1_v)$. Next we apply constraint (4)
and we get: $z^2_u\leq
(\alpha+\beta)+2\times(\alpha+\beta)+\alpha=4\alpha+3\beta (=3z^1_v
+\alpha)$ and $z^2_v\leq 3\alpha+\beta +\beta (= z^1_u+\beta)$. We
can continue in this way, until we reach the parallel round
$t=\ell$. It is easy to prove by induction that the following
assignment for $Y$ and $z$ variables satisfy the set of constraints
(2) to (4): $\Y{u}{v}{t}=z^t_u$, $\Y{v}{u}{t}=\Y{w}{u}{t}=z^t_v$,
where $z_u$ and $z_v$ are defined recursively as follows:
\begin{align}
    &z^{t+1}_u=\alpha+3z^{t}_v \label{Equ:Zv}\tag{E1}\\
    &z^{t+1}_v=\beta+z^{t}_u \label{Equ:Zu}\tag{E2}
\end{align}
Note that  we assign the same value, $z^t_u$, to all nodes in $U$,
and the same value, $z^t_v$, to all nodes in $V$. Also we assign the
same value for all $Y$ variables of the same type. By combining the
above two recursive equations \ref{Equ:Zv} and \ref{Equ:Zu}, we get
the following independent recursive definition for $z_u$.
\begin{align}
&z^{t+2}_u=\alpha+3\beta+3z^{t}_u \tag{E3}\label{Equ:recZu}\\
&z^1_u=3\alpha+\beta\tag{E4}\\
&z^2_u=4\alpha+3\beta\tag{E5}
\end{align}
It remains to check if the above assignments also satisfy the set of
constraints (6) and (7).
\begin{align*}
&\mbox{(6): } m\cdot(3\alpha+\beta)+m\cdot(\alpha+\beta)\geq 2
\Rightarrow m\cdot(4\alpha+2\beta)\geq 2 \Rightarrow
2\alpha+\beta\geq \frac{1}{m}=\frac{2}{n}\\
&\mbox{(7): } m\cdot z^{t-1}_u+m\cdot z^{t-1}_v+1\leq m\cdot
z^{t-1}_u+m\cdot\beta+m\cdot3z^{t-1}_v+m\cdot\alpha\Rightarrow
\alpha+\beta\geq\frac{1}{m}=\frac{2}{n}
\end{align*}
Therefore, in order to satisfy the set of constraints (6) and (7),
take $\alpha, \beta\geq 0 $ be real numbers such that
$\alpha+\beta=\frac{1}{m}$. Equation \ref{Equ:recZu} for any $t\geq
1$ implies that $z^{t+2}_u\geq 3\times z^{t}_u$. By solving this
recursive inequality we get $z^{2k}_u\geq z^1_u\cdot3^{k}\geq
\frac{1}{m}\times3^{k}$. Hence, for $k=\ceil{\log_3{m}}$ we get
$z^{2k}_u\geq 1$, and consequently we get $z_v^{2k+1}\geq
z^{2k}_u\geq 1$ by Equation \ref{Equ:Zu}. Therefor by taking
$\ell=2\cdot\ceil{\log_3{m}}+1$, the set of constraints (1) is also
satisfied. Note that the above feasible solution has the objective
value of $m\times(\alpha+\beta)=1$, since $\card{U}=\card{V}=m$.
This shows that the LP has the optimum value of $O(1)$ and this
completes the proof.
\end{proof}
\begin{figure}[!h]
\begin{center}
    \input{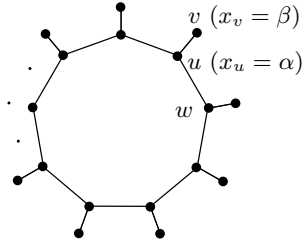}
\caption{A planar graph with large integrality gap}
\label{Fig:IG-lPDS}
\end{center}
\end{figure}

\subsection{An integer programming formulation for PDS}
As we mentioned before, the $\ell$-round PDS problem for $\ell\geq
n-1$ is the PDS problem. Here we consider an integer programming
(IP) formulation for the PDS problem that is different from the IP
formulation considered in the previous section for $\ell$-round PDS.
The IP formulation here is based on finding an ordering in which the
nodes are power dominated in the optimal solution. Given an
undirected graph $G=(V,E)$ with $n$ nodes, let
$\T=\set{1,2,\ldots,n}$. In this IP, all of the variables have the
same definition and meaning as before except $z^t_v$. Here, $z^t_v$
indicates the round in which $v$ is power dominated. The variable
$z_v^t$ is equal to $1$ if and only if $v$ is power dominated at
round $t$. Before stating the IP formulation for the PDS problem
formally, we describe the set of constraints informally. There are
two sets of constraints for presenting the ordering in which the
nodes are power dominated (see (1) and (2)). There is another set of
constraints for checking if node $u$ can power dominate node $v$ at
round $t$. Node $u$ can power dominate $v$ only if either $u$ is in
the optimal solution or all of the nodes in its closed neighborhood
except $v$ are already power dominated (see (3)). The last set of
constraints is to check if a node can be power dominated at round
$t$. Node $v$ can be power dominated at time $t$ only if either it
is in the optimal solution or at least one of its neighbors was
ready to power dominate it at the end of the previous round.
\begin{align*}
(\texttt{IP}_{O})\quad &\min \sum_{v}{x_v} \quad \\
\mbox{s.t.}\quad&\\
(1)\quad& \sum_{t}{z_v^{t}}=1 & \forall v\in V(G)\\
(2)\quad& \sum_{v}{z_v^{t}}=1 & \forall t\in\T\\
(3)\quad& Y^{t}_{u \rightarrow v}\leq \sum_{t'=1}^{t}{z_{w}^{t'}}+x_u& \forall (u,v):\set{u,v}\in E(G), \forall t\in\T\setminus\set{n}, \forall w\in N[u]-v\\
(4)\quad& z_{v}^{t}\leq \sum_{u\in N(v)}{Y^{t-1}_{u \rightarrow v}}+x_v& \forall v\in V(G), \forall t\in\T\setminus\set{1}\\
(5)\quad& \mbox{All variables are binary}\\
\end{align*}
It is easy to verify that the above IP formulates the PDS problem.
Now we consider the LP relaxation that is obtained by relaxing the
integrality of variables to nonnegativity constraints. We add the
following valid inequality to the above IP to get a stronger LP.
This new inequality forces a power domination step to occur at each
round.
\begin{align*}
(6)\quad& 1 \leq \sum_{\set{u,v}\in E}{Y^{t}_{u \rightarrow v}}& \hspace{5.5cm}\forall t\in T\setminus\set{n}\\
\end{align*}
Now we show that this LP relaxation has a big integrality gap.
\begin{theorem} The LP relaxation for PDS has an integrality gap of
$\Omega(n)$.
\end{theorem}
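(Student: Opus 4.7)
The plan is to reuse the planar graph $G$ from the previous theorem, namely the graph obtained from the cycle $C_m$ by attaching one pendant to each cycle node, so that $n=2m$. The earlier analysis already shows that any optimal PDS of $G$ has size at least $\lceil m/3\rceil=\Omega(n)$, so it suffices to exhibit a feasible solution of the LP relaxation of $(\texttt{IP}_O)$ whose objective value is $O(1)$.

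The assignment I would propose is fully symmetric: set $x_v=\tfrac{1}{n}$ for every $v\in V(G)$, set $z_v^t=\tfrac{1}{n}$ for every $v$ and every $t\in\T$, and set $Y^t_{v\to u}=\tfrac{1}{m}$ for every pendant $v$ with its cycle-neighbor $u$ and every $t\in\T\setminus\set{n}$, while all remaining $Y$-variables are $0$. The objective is then $\sum_v x_v=1$. Constraints (1) and (2) hold because $n\cdot\tfrac{1}{n}=1$, and constraint (6) holds because the only non-zero $Y$-variables together contribute $m\cdot\tfrac{1}{m}=1$ in every round $t<n$.

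The remaining verifications concern constraints (3) and (4). For (3), the only non-vacuous case is the bound on $Y^t_{v\to u}$ with $v$ a pendant: since $N[v]\setminus\set{u}=\set{v}$, the bound reads $\tfrac{1}{m}\leq\sum_{t'\leq t}z_v^{t'}+x_v=\tfrac{t+1}{n}$, which holds for every $t\geq 1$. For the other $Y$-variables the left-hand side is $0$, so (3) is automatic. For (4), the pendant case becomes $\tfrac{1}{n}\leq Y^{t-1}_{u\to v}+x_v=0+\tfrac{1}{n}$ (tight), and the cycle case becomes $\tfrac{1}{n}\leq Y^{t-1}_{u_1\to u}+Y^{t-1}_{u_2\to u}+Y^{t-1}_{v_u\to u}+x_u=0+0+\tfrac{1}{m}+\tfrac{1}{n}$ (slack), where $u_1,u_2$ are the cycle-neighbors of $u$ and $v_u$ is its pendant.

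The one place that needs a little care is the interplay between constraints (3) and (6): the only directed edges carrying flow are the $m$ pendant-to-cycle arrows, so each must carry exactly $\tfrac{1}{m}$, and constraint (3) at the bootstrap round $t=1$ is the tightest single check, reducing to $\tfrac{1}{m}\leq\tfrac{2}{n}=\tfrac{1}{m}$. Once this is confirmed, the LP has value $1$ while the integer optimum is at least $\lceil m/3\rceil=\Omega(n)$, which gives the claimed $\Omega(n)$ integrality gap.
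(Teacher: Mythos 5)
Your proposal is correct and follows essentially the same route as the paper: the same pendant-augmented cycle $C_m$, the same integer lower bound of $\lceil m/3\rceil=\Omega(n)$ inherited from the earlier reduction, and a symmetric fractional solution of objective value $1$ with $x_v=z_v^t=\tfrac{1}{n}$. The only difference is cosmetic --- the paper spreads $\tfrac{1}{2n}$ uniformly over all $2n$ directed $Y$-variables, whereas you concentrate $\tfrac{1}{m}$ on the $m$ pendant-to-cycle arrows; both satisfy constraints (3), (4) and (6), and your version has the minor virtue of making the tight cases explicit.
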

\begin{proof}
Consider the graph $G=(V,E)$ that is obtained from the cycle on $m$
nodes, $C_m$, by attaching a node of degree one to each node of the
cycle (see Figure \ref{Fig:IG-lPDS}). Clearly, $G$ has $n=2\times m$
nodes. It is easy to check that the size of an optimal solution for
PDS on $G$ is at least $\frac{m}{3}=\frac{n}{6}=\Omega(n)$.

Now we show that the LP relaxation has an optimum value of $O(1)$.
The graph $G$  has $n$ nodes and $n$ edges. Assign value
$\frac{1}{n}$ to each variable $x_v$ and $z_v^t$  and value
$\frac{1}{2n}$ to each variable $Y^{t}_{u \rightarrow v}$. It is
easy to check that  all of the constraints are satisfied by this
assignment. The objective value of this assignment is
$n\times\frac{1}{n}=1$, and this implies that the optimum value is
$O(1)$.
\end{proof}

\clearpage

\clearpage
\appendix
\renewcommand{\theequation}{A.\arabic{equation}} 
\renewcommand{\thetheorem}{A.\arabic{theorem}} 
\renewcommand{\thesection}{A}
\setcounter{equation}{0}  
\setcounter{theorem}{0}
\noindent{\Large \bf Appendix}\\
\subsection{Proofs}\label{Apx:Proof}
\begin{pf}{Proposition \ref{Thm:NP-lPDS}}
It is known that \textsc{Dominating Set} is an $\np$-hard problem on
planar graphs \cite{Bib:GAJO79}, so we just need to prove the
theorem for $\ell\geq 2$. We use (almost) the same reduction that
has been used to prove the $\np$-hardness of PDS on planar graphs
\cite{Bib:KNMRR04,Bib:GUNR05}. Given a planar graph $G=(V,E)$ and a
parameter $\ell\geq 2$ construct a graph $G'$, an instance of
$\ell$-round PDS, as follows: for each node $v\in V$ create a path
$P_v$ of length $\ell-1$ and identify an end point of $P_v$ with the
node $v$. This transformation preserves the planarity,  and the
theorem follows from the fact that the size of an optimal solution
for the \textsc{Dominating Set} problem on $G$ is equal to the size
of an optimal solution for the $\ell$-round PDS problem on $G'$. The
proof goes in the same way as the proof for the $\np$-hardness of
PDS by Kenis et al.~\cite{Bib:KNMRR04} and Guo et
al.~\cite{Bib:GUNR05}.

Assume that $S$ is an optimal solution for the \textsc{Dominating
Set} problem in $G$. Now it is easy to see that $S$ power dominates
$G'$ in exactly $\ell$ parallel rounds; the set $S$ in one parallel
round power dominates all nodes in $V$ and then in the remaining
$\ell-1$ parallel rounds each node $v\in V$ starts to power dominate
the nodes on its attached path, $P_v$.

Now assume that $S'$ is an optimal solution for the $\ell$-round PDS
problem on $G'$. It is easy to see that there is such an $S'$ that
contains only nodes of degree at least $3$, so consider that $S'$.
Assume that $S'$ is not a dominating set for $G$, so there is a node
$v\in V$ that is not power dominated in the first parallel round.
This means that $v$ is power dominated by applying the propagation
rule to one of its neighbors say $u$. It is not hard to see that
$u\in V$. Let $u'$ be the neighbor of $u$ in the path $P_u$. Since
$u$ is not in $S'$, $u'$ should also be power dominated through $u$
by applying the propagation rule. This means that $u$ power
dominates both $v$ and $u'$, but this is impossible. Therefore $S'$
should be a dominating set for $G$.
\end{pf}

In the reduction for the $\np$-hardness of PDS in
\cite{Bib:KNMRR04,Bib:GUNR05} only an edge, that is, a path of
length $1$, is attached to each node. Here,  we attach a path of
length $\ell-1$ to make an instance such that the optimal solution
needs exactly $\ell$ parallel rounds to power dominate the entire
graph.

\subsection{Dynamic programming \label{Apx:DynProg}}
Our dynamic programming is based on valid timed-orientations which
is an extension of the new formulation for PDS introduced by Guo et
al.\ \cite{Bib:GUNR05}. It is similar to the dynamic programming for
the \textsc{Dominating Set} problem given in \cite{Bib:ABFKN02} and
the dynamic programming for PDS given in \cite{Bib:GUNR05}.

Fix a parameter $\ell$ and consider the $\ell$-round PDS problem.
Assume that the graph $G=(V,E)$ and $V'\subseteq V$ and a nice tree
decomposition $\langle \set{X_i\subseteq V\ |\ i\in
I},T=(I,F)\rangle$ of $G$ with tree-width $k$ are given as input.
Let $T_i$ denote the subtree of $T$ rooted at node $i\in I$, and
$Y_i$ denote the set $(\bigcup_{j\in V(T_i)}{X_j})\setminus X_i$.
Furthermore, let $G_i$ be the subgraph induced on $Y_i\cup X_i$,
i.e. $G_i=G[Y_i\cup X_i]$. Also denote by $G'_i$ the subgraph
induced on $X_i$.  Let $n_i$ and $m_i$ be the number of nodes and
the number of edges in $G'_i$ respectively. The dynamic programming
works on the bottom-up fashion. On each bag $X_i$, it considers all
 valid timed-orientations of the subgraph $G_i$ and stores the
number of origins together with the orientation on the edges of
$X_i$ as the states of the bag $X_i$. In the other words, the valid
timed-orientation is stored through the states of the bag.

{\noindent \bf The state of a bag:} The state $s$ for a bag $X_i$
defines the orientation of the edges inside $G'_i$, the time label
assigned to the nodes of $X_i$, the number of directed edges from
$v\in X_i$ to all nodes in $Y_i$, and also the maximum of the
time-label assigned to the neighbors of  $v$ in the set $Y_i$. In a
bag state $s$ we denote the state of an edge $e=\set{u, v}\in
E(G'_i)$ by $s(e)$, the time-label assigned to $v\in X_i$ by
$s_t(v)$, the number of incoming edges from $Y_i$ to $v$ by
$s_{-}(v)$, the number of outgoing edges from $v$ to $Y_i$ by
$s_{+}(v)$ and the maximum of the time-label assigned to the
neighbors of $v$ in $Y_i$ by $s_y(v)$. Let $e=\set{u,v}$ be an edge
in $G'_i$, then $s(e)$ takes one of the following $3$ values:
``$u\rightarrow v$'', ``$v\rightarrow u$'' or ``$\bot$''; where the
first two values shows the direction of the edge $e$ in the valid
timed-orientation and the third one indicates that $e$ is left
undirected. Consider a node $v\in X_i$, $s_t(v)$ takes a value from
$\set{0}\cup\set{1, 2, \ldots, \ell}\cup\set{\hat{1}, \hat{2},
\ldots, \hat{\ell}}\cup\set{+\infty}$.  The node $v$ with
$s(v)=\hat{a}$ means that we still ask for a propagation rule to be
applied to a neighbor of $v$ and power dominate it, and $s(v)=a$
shows that the node $v$ is already power dominated at the current
stage of the algorithm. We use  $\uh{s(v)}$ to denote the integer
value of the label $s(v)$ ignoring the hat notation (e.g.
$\uh{\hat{2}}=2$). Also $s_{-}(v)$ takes a value from $\set{0, 1}$,
 $s_{+}(v)$ takes a value from $\set{0, 1, 2}$ and $s_y(v)$ takes
a value from $\set{0, 1, \ldots, \ell}\cup\set{+\infty}$. Values of
$0$ or $1$ for $s_{-}(v)$ and $s_{+}(v)$ shows the exact number of
incoming/outgoing edges to/from $v$, but $2$ means that there are at
least $2$ edges. Let us denote by $\mathcal{S}_i$ the set of all
possible states for the bag $X_i$. It is straightforward to check
that the number of bag states for $X_i$ is
$\card{\mathcal{S}_i}=3^{m_i}\times(2\ell+2)^{n_i}\times
5^{n_i}\times(\ell+2)^{n_i}$. Note that a node cannot have
$s_{-}(v)=1$ and $s_{+}(v)=2$ at the same time (this follows easily
from Definition \ref{Def:ValTOR}), so $s_{-}(v)$ and $s_{+}(v)$ have
$5$ different combinations.

For each bag $X_i$ we will compute and store a mapping
$A_i:\mathcal{S}_i\rightarrow \mathbb{N}\cup\set{+\infty}$. For a
bag state $s\in\mathcal{S}_i$, the value $A_i(s)$ shows the minimum
number of origins in the optimal valid timed-orientation of the
subproblem induced on $G_i$ under the restriction that the
orientation of edges and labeling of nodes in $X_i$ is defined by
the state $s$. A bag state $s\in\mathcal{S}_i$ for the bag $X_i$ is
called \emph{invalid} if
\begin{align*}
\begin{split}
(\neg {\mbox{P1}}\equiv)&\left(\exists v\in V'\cap X_i: s_t(v)=+\infty\right) \vee\\
(\neg {\mbox{P2}}\equiv)&\left(\exists v\in X_i:(1\leq \uh{s_t(v)}\leq\ell)\wedge (d_i^{-}(v)+s_{-}(v)> 1)\right)\vee\\
(\neg {\mbox{P3}}\equiv)&\left(\exists v\in X_i: s_t(v)=+\infty\wedge (d_i^{-}(v)+s_{-}(v)+d_i^{+}(v)+s_{+}(v)\geq 1)\right)\vee\\
(\neg {\mbox{P4}}\equiv)&\left(\exists v\in X_i:
s_t(v)=0\wedge d_i^{-}(v)+s_{-}(v)\geq 1\right)\vee\\
(\neg {\mbox{P5}}\equiv)&\left(\exists e=\set{u,v}\in E(G'_i): s(e)=``u\rightarrow v" \wedge ((\uh{s_t(v)}=1\wedge \uh{s_t(u)}\neq 0)\vee(\uh{s_t(v)}>1\right.\\
&\hspace{2cm}\left.\wedge \uh{s_t(v)}< 1+\max\set{s_y(u)}\cup\set{\uh{s(w)}: w\in N_i[u]-v}))\right)\vee\\
&\left(\exists v\in X_i, \exists a\in\set{\hat{1}, \ldots,
\hat{\ell}}: (s_t(v)=a \wedge d_i^{-}(v)+s_{-}(v)\neq0)\right)\vee\\
&\left(\exists a\in\set{1, \ldots, \ell}: s_t(v)=a \wedge
d_i^{-}(v)+s_{-}(v)=0\right)
\end{split}
\end{align*}
where $d_i^{-}(u)$, $d_i^+(u)$, and $N_i[u]$ denote respectively the
in-degree, out-degree, and closed neighborhood of $u$ in the graph
 that is obtained from $G'_i$ by orienting edges according to the
 state $s$. Recall that the in-degree/out-degree shows the
 number of directed incoming/outgoing edges, but for the closed
 neighborhood we consider the undirected graph $G'_i$.
 Informally a bag state is invalid if it either violates any
one of the valid timed-orientation's properties (P1 to P5), or it
cannot be extended to a valid timed-orientation.
Now we describe our dynamic programming:

{\noindent \bf Step 1: (Initialization)} In this step for each leaf
node $i$ in the tree $T$, we define (initialize) the mapping $A_i$
for each $s\in\mathcal{S}_i$ as follows:
$$A_i(s)=\left\{\begin{array}{ll}
+\infty & \mbox{if either $s$ is  invalid or $(\exists v\in X_i: s_{-}(v)+s_{+}(v)+s_y(v)\neq 0)$}\\
\card{\set{v\in X_i: s(v)=0}}& \emph{otherwise}\\
\end{array}\right.$$

{\noindent \bf Step 2: (Bottom-Up Computation)} In this step we
compute in the bottom-up fashion from leaves to root the mapping
corresponding to each bag in the tree. Recall that the tree nodes
have three types: \textsc{Join Node, Insert Node, Forget Node}. In
the following we describe how to compute $A_i$ in each of these
three cases. In each point of the algorithm we preserve the
following invariant: for each $s\in \mathcal{S}_i$ there exists a
valid timed-orientation for $G_i$ which is compatible with $s$ and
has minimum number of origins equal to $A_i(s)$ where the power
domination of all nodes in $G_i$ are justified except for the ones
in $\set{v\in X_i: s(v)=\hat{a}\mbox{ for some }a\in\set{1, \ldots,
\ell}}$.

{\noindent \bf \textsc{Forget Node:}} Suppose $i$ is a forget node
with child $j$, and assume that $X_j=X_i\cup\set{x}$. The bag states
$s\in\mathcal{S}_i$ and $s'\in\mathcal{S}_j$ are called
\emph{forget-compatible} and denoted by  $s\overset{F}{\sim} s'$, if
\begin{itemize}
\item[(F1)] $\forall e\in E(G'_j): s(e)=s'(e)$,
\item[(F2)] $\forall v\in V(G'_j): s_t(v)=s'_t(v)$
\item[(F3)] $\forall v\in V(G'_j): s_{-}(v)=s'_{-}(v)+ [s'(\set{x, v})=``x \rightarrow v"] \wedge s_{+}(v)=s'_{+}(v)+
[s'(\set{x, v})=``v \rightarrow x"]$
\item[(F4)] $s'(x)\in \set{0, 1, \ldots, \ell}\cup\set{+\infty}$
\item[(F5)] $\forall v\in V(G'_j): s_y(v)=\begin{cases}
\max\{s'_y(v), s'_t(x)\} & \mbox{if $\set{x, v}\in E(G'_j)$}\\
s'_y(v) &\mbox{otherwise}
\end{cases}$,
\end{itemize}
where $[P]$ is equal to $1$ if $P$ is a true statement and $0$
otherwise. Now we compute the mapping $A_i$ for the bag $X_i$ as
follows: $\forall s\in \mathcal{S}_i$
$$A_i(s)=\begin{cases}
+\infty & \mbox{if $s$ is invalid}\\
\min\{A_j(s'): s'\in \mathcal{S}_j,s\overset{F}{\sim}s'\} &
\mbox{otherwise} \end{cases}$$ Note that since $x$ is not in $X_i$,
by property (3) of the tree decomposition (Definition \ref{Def:TWD})
it will never appears in any bag in the rest of the algorithm. This
implies that the power domination of $x$ should be justified within
$G_j$.

{\noindent \bf\textsc{Insert Node:}} Suppose $i$ is an insert node
with child $j$, and assume that $X_i=X_j\cup\set{x}$.
We introduce a mapping $\phi:\mathcal{S}_i\rightarrow\mathcal{S}_j$
in the following way. We map a given state $s\in\mathcal{S}_i$ to
$s'=\phi(s)\in\mathcal{S}_j$ in the following way.
\begin{itemize}
\item[(I1)] $\forall e\in E(G'_j): s'(e)=s(e)$
\item[(I2)] $\forall v\in V(G'_j): s'_t(v)=
\begin{cases}
\hat{a} & \mbox{if $s_t(v)=a$ and $[s(\set{x, v})=``x \rightarrow v"]=1$}\\
s_t(v) & \mbox{otherwise}
\end{cases}$
\item[(I3)] $\forall v\in V(G'_j): s'_{-}(v)=s_{-}(v), s'_{+}(v)=s_{+}(v), s'_y(v)=s_y(v)$
\end{itemize}
We now compute the mapping $A_i$ for the bag $X_i$ as follows:
$\forall s\in \mathcal{S}_i$
$$A_i(s)=
\begin{cases}
+\infty & \mbox{if $s$ is  invalid or $s_{-}(x)+s_{+}(x)+s_y(x)\neq 0$}\\
A_j(\phi(s))+[s(x)=0]& \mbox{otherwise}
\end{cases}$$
Again note that, since $x$ appears in $X_i$ but not in $X_j$ by
property (3) of tree decomposition it is the first time that it
appears in the subtree rooted at node $i$. Also note that $x$ cannot
have a neighbor in $Y_i$.

{\noindent \bf\textsc{Join Node:}} Suppose $i$ is a join node with
$j$ and $k$ as its children, and assume that $X_i=X_j=X_k$. We say
$s'\in \mathcal{S}_j$ and $s''\in \mathcal{S}_k$ are
\emph{join-compatible} with $s\in\mathcal{S}_i$ and denote it by
$s\overset{J}{\sim}(s',s'')$, if the following conditions hold:
\begin{itemize}
\item[(J1)] $\forall v\in V(G'_i): (s_{-}(v)=s'_{-}(v)+s''_{-}(v)) \wedge (s_{+}(v)=s'_{+}(v)+s''_{+}(v)) \wedge
(s_y(v)=\max\{s'_y(v), s''_y(v)\})$
\item[(J2)] $\forall v\in X_i: s_t(v)\in\set{0}\cup\set{\hat{1}, \ldots, \hat{\ell}}\cup\set{+\infty} \Rightarrow
s_t(v)=s'_t(v)=s''_t(v)$,
\item[(J3)] $\forall v\in X_i: s_t(v)=a\in\set{1, \ldots, \ell} \Rightarrow \left( \left(\uh{s'_t(v)}=\uh{s''_t(v)}=a\right)
\wedge \left( s'_t(v)=a \vee s''_t(v)=a\right)\right)$
\item[(J4)] $\forall e\in E(G'_i): (s(e)=``\bot" \Rightarrow
s'(e)=s''(e)=``\bot") \wedge \\( s(e)=``u\rightarrow v" \Rightarrow
(s'(e)\neq ``v\rightarrow u" \wedge s''(e)\neq ``v\rightarrow u"))$
\end{itemize}
Informally  $s$ is join-compatible with the pair of states
$(s',s'')$ if they are assigning the same label to the nodes and the
same orientation to the edges, and also if a node is power dominated
it should be justified in either $s'$ or $s''$. We now compute the
mapping $A_i$ for the bag $X_i$ as follows: $\forall s\in
\mathcal{S}_i$
$$A_i(s)=
\begin{cases}
+\infty & \mbox{$s$ is  invalid}\\
\min{\set{A_{j}(s')+A_{k}(s'') -\card{\set{v\in X_i: s(v)=0}}:
\mbox{$s'\in\mathcal{S}_j$, $s''\in\mathcal{S}_k$,
$s\overset{J}{\sim}(s',s'')$}}} & \mbox{otherwise}
\end{cases}
$$

{\noindent \bf Step 3: (At root $r$)} Let $r$ be the root of the
tree decomposition $T$. Finally we compute the number of origins in
the optimal solution for $\ell$-round PDS on the instance $\langle
G,V'\rangle$ in the following way:
$$\min{\set{A_r(s): s\in\mathcal{S}_r, \forall v\in X_r: s(v)\in\set{0, 1, \ldots,
\ell}\cup\set{+\infty}}}.$$

Recall  that the number of bag states is
$\card{\mathcal{S}_i}=3^{m_i}\times(2\ell+2)^{n_i}\times
5^{n_i}\times(\ell+2)^{n_i}$. Let $m_e$ be the maximum number of
edges that a bag can have, and also note that the maximum number of
nodes is $(k+1)$. It is easy to check that each step of the dynamic
programming can be computed in time $O(c^{m_e+k\log{\ell}})$, for
some global constant $c$. This shows that the total running time of
our algorithm is $O(c^{m_e+k\log{\ell}}\cdot \card{V})$.

\end{document}